\newcommand{\fl}[1]{\ensuremath{\mathrm{fl}(#1)}}
\newcommand{\flp}[2]{\ensuremath{\mathrm{fl}_{#1}}(#2)}
\newcommand{\flr}[1]{\ensuremath{\mathrm{flr}(#1)}}
\newcommand{\ulp}[1]{\ensuremath{\mathrm{ulp}(#1)}}
\newtheorem{theorem}{Theorem}[section]
\newtheorem{corollary}{Corollary}[theorem]
\begin{abstract}
      In the literature on algorithms for computing multi-term addition $s_n=\sum_{i=1}^n x_i$ in floating-point arithmetic it is often shown that a hardware unit that has single normalization and rounding improves precision, area, latency, and power consumption, compared with the use of standard add or fused multiply--add units.
  However, non-monotonicity can appear when computing sums with a subclass of multi-term addition units, which is currently not explored in the literature.
  We prove that computing multi-term floating-point addition with $n\geq 4$, without normalization of intermediate quantities, can result in non-monotonicity---increasing one of the addends $x_i$ decreases the sum $s_n$.
  Summation is required in dot product and matrix multiplication operations, operations that are increasingly appearing in the hardware of high-performance computers, and knowing where monotonicity is preserved can be of interest to the developers and users.
  Non-monotonicity of summation in existent hardware devices that implement a specific class of multi-term adders may have appeared unintentionally as a consequence of design choices that reduce circuit area and other metrics.
  To demonstrate our findings we simulate non-monotonic multi-term adders in MATLAB using the \texttt{CPFloat} custom-precision floating-point simulator.
\end{abstract}
\title{Monotonicity of Multi-Term Floating-Point Adders}
\author{Mantas Mikaitis\thanks{%
		School of Computing,
		University of Leeds,
		Leeds, LS2 9JT, United Kingdom
		(\texttt{M.Mikaitis@leeds.ac.uk}).
    Preprint version: 4th of December, 2023.}}
\begin{document}

\maketitle

\section{Introduction}

A real function $f$ is monotonically nondecreasing on an interval $[a,b]$ if $f(x) \leq f(y)$ whenever $a \leq x \leq y \leq b$.
In other words, when the argument of a monotonic function is increasing, the value of the function does not decrease.
Similarly for a function that is monotonic nonincreasing: when the input argument is increasing the value of the function is not increasing.
For the multivariate functions, a function is monotonic if it is monotonic for all the input values.
For example, $f(x_1, x_2)$ is monotonic nondecreasing if for any $x_1 \leq x^*_1$ and $x_2 \leq x^*_2$ we have $f(x_1, x_2) \leq f(x^*_1, x^*_2)$.

Summation of a set of values is a multivariate function that is monotonic nondecreasing (just monotonic thereafter).
Given a set of input values $x_1$, $x_2$, $...$, $x_n$, summation is expressed as
\begin{equation}
  f(x_1, x_2, ..., x_n) = \sum_{i=1}^n x_i.
  \label{eq:sum}
\end{equation}
This function is monotonically nondecreasing by the definition of the sum: take $x_n = a$ followed by $x_n = a + \varepsilon$, with $\varepsilon > 0$.
Then $f(x_1, x_2, ..., a) = (\sum_{i=1}^{n-1} x_{i}) + a$, whereas $f(x_1, x_2, ..., a + \varepsilon) = (\sum_{i=1}^{n-1} x_{i}) + a + \varepsilon > (\sum_{i=1}^{n-1} x_{i}) + a$.
Because of the commutativity of the sum this is true for all the input arguments.

Summation of values is at the core of scientific computing---it is required, for example, for calculating vector--vector products, matrix--vector and matrix--matrix multiplications, as well as in evaluating polynomials.
Most computer software works with floating-point numbers \cite{ieee19}, rather than exact numbers, so the addition operation is different from the exact addition and the monotonicity of summation should be tested rather than assumed to hold because it holds in exact arithmetic.
This is similar to the properties of associativity, commutativity and distributivity---commutativity is generally preserved, but associativity and distributivity in basic arithmetic operations, when transitioning from exact to floating-point arithmetic, are not~\cite[Sec.~2.6]{mbdj18}.

From here we use the notation of the standard model \cite[Sec.~2.2]{high:ASNA2} of addition in precision-$p$ arithmetic:
$$
\fl{x+y}=(x+y)(1+\delta), \quad |\delta| \leq u
$$
where $u=2^{-p}$, the unit roundoff, and $\fl{x}$ refers to normalizing (see Section~\ref{sec:fp-arith}) and rounding $x$ to form a floating-point value defined above.

Most floating-point units on general-purpose hardware include 2-term adders that compute the sum of floating-point numbers $z=\fl{a+b}$ which is required by the IEEE 754 standard \cite[Sec.~5.4.1]{ieee19}.
This operation includes computing $a+b$ as though in infinite precision, normalizing the resultant significand if required and rounding it to obtain $z$ \cite[Sec.~7.3]{mbdj18}.
At software level, this operation is called repeatedly to compute sums of arbitrary length (Equation~\ref{eq:sum}).
A high-level algorithm \cite[Sec.~4.2]{high:ASNA2} is given as Algorithm~\ref{alg:software-sum}.
On line 3 the operation $\fl{a+b}$ includes rounding and normalization, which means that, when implemented this way, overall the $n$-term addition performs $n-1$ such roundings and normalizations.
It is safe to say that most software is implemented this way on the hardware that includes circuitry only for adding two numbers as per IEEE 754.

\begin{algorithm2e}[t]
  \caption{Given numbers $\mathcal{S} = \{x_1, ..., x_n\}$, compute $s_n=\sum_{i=1}^{n}x_i$.}
  \label{alg:software-sum}
  Repeat while $\mathcal{S}$ contains more than one element\\
    \Indp From $\mathcal{S}$, remove two numbers $a$ and $b$\\
    Put $\fl{a+b}$ to $\mathcal{S}$\\
  \Indm
  The remaining element in $\mathcal{S}$ is $s_n$
\end{algorithm2e}

Summation of three, four or longer vectors of floating-point values can be classified under reduction operations which are recommended but not required by the IEEE 754 standard~\cite[Sec.~9.4]{ieee19}.
The guideline provided by IEEE 754 for implementing them is not significantly constrained.

Addition of more than three values can be achieved by an accumulator which takes as one of the inputs its output from the previous stage or by using multiple 2-term adders in series or in parallel (Figure~\ref{fig:adders}).
However, in hardware, a custom design is often considered to gain in speed and circuit area compared with the straightforward approach of using the standard 2-term adders.
Algorithm~\ref{alg:multi-operand-sum} shows the main steps taken by most of the multi-term adders available in the literature (see Section~\ref{sec:fp-arith} for the details on floating point).
The main point to note is that normalization and rounding are performed once, at the end, rather than after each intermediate addition operation, as in Algorithm~\ref{alg:software-sum}, which in literature on hardware appears to be beneficial for saving hardware resources and even increasing the accuracy.
All of this applies to dot products and matrix multiplies, by replacing $\fl{a+b}$ with $\fl{\fl{a\times b}+c}$ or the fused multiply--add (FMA) $\fl{a\times b+c}$.

\begin{algorithm2e}
  \caption{Given numbers $\mathcal{S} = \{x_1, ..., x_n\}$, with exponents $\{e_1, ..., e_n\}$ and significands $\{1.m_1, ..., 1.m_n\}$ compute $s_n=\sum_{i=1}^{n}x_i$.}
  \label{alg:multi-operand-sum}
  Determine $e_{max}=\mathrm{max}(e_1,...,e_n)$\\
  Align all $1.m_i$ by shifting each $e_{max}-e_i$ steps right\\
  Perform addition of aligned significands\\
  Perform normalization and rounding to form $s_n$\\
\end{algorithm2e}

As a side note, unnormalized floating-point arithmetic appeared as early as 1958 in the work of Metropolis~and~Ashenhurst~\cite{meas58}; the performance improvement compared with the normalized arithmetic was noted even then.

In this paper we identify four types of implementation for the Algorithm~\ref{alg:multi-operand-sum} and demonstrate that one class of designs for multi-term addition are non-monotonic, including various commercial hardware designs available and widely present on the machines in the TOP500\footnote{\url{https://top500.org/lists/top500/list/2023/11/}}.
We also suggest that this issue can be fixed by masking off the bottom bits when carries occur or by using an adder from a different class, which can be added into the future summation and dot product hardware designs as an option.

We mentioned that non-monotonic summation in floating-point arithmetic adds to the list of mathematical properties that are not preserved when switching from exact arithmetic.
One other motivating point for studying this is reproducibility of numerical computations.
Bit-wise reproducibility is not impossible on single-core CPUs that implement the IEEE 754 standard correctly and assuming special features such as 80-bit arithmetic are not enabled by compilers.
If we run some code in IEEE 754 software, using basic operations of a floating-point unit (FPU), we get one behaviour, but if we run that code in hardware that performs summation non-monotonically, we may get unexpected results that may be hard to explain.
We now demonstrate this with an example.

\subsection{An example on current hardware}

For the following we use NVIDIA A100 SXM 80GB GPU because we have access to it, however we predict that most commercial hardware in the market would not pass the following, and similar, tests.

A100 GPUs are equipped with matrix multiply hardware that can perform $D = A\times B + C$, where $A\in \mathbb{R}^{8\times 8}$ and $B \in \mathbb{R}^{8\times 4}$ are binary16 matrices, and $C,D \in \mathbb{R}^{8\times 4}$ are binary32 matrices \cite[p.~20]{nvid20a}.
We showed before that matrix multipliers in this GPU have one extra bit in the alignment of binary32 significands \cite{fhmp21}.
We will focus on two resulting matrix elements, which perform dot products $d_{11}=a_{11}b_{11} + a_{12}b_{21} + \cdots + a_{18}b_{81} + c_{11} $ and $d_{12}=a_{11}b_{12} + a_{12}b_{22} + \cdots + a_{18}b_{82} + c_{12} $ by most likely implementing a 9-term adder of products.
We set $A,B=1$ (matrices of ones) and $c_{11}=33554430$ and $c_{12}=33554432$.
Computing $A\times B+C$ with the GPU matrix multipliers returns a matrix that has $d_{11}=33554436$ and $d_{12}=33554432$, demonstrating non-monotonic behaviour where an increase in one of the 9 addends decreases the sum.
Note that we increased the addend by 2 and the sum was decreased by 4 since the amount we decrease by comes from the other 8 addends not contributing to the sum.
Similar examples can be constructed around any powers of two, and at the edges of the dynamic range the quantities by which the sum changes by decreasing one of the addends would be larger in absolute terms.
In the past we performed tests that indicated that NVIDIA V100 and the T4 GPUs have similar behaviour, although the V100 differs due to narrower internal accumulator in the multi-term addition \cite{fhmp21}.
NVIDIA H100 and the AMD matrix multipliers are yet to be tested.

\subsection{Multi-term floating-point addition in literature}
\label{sec:classes}

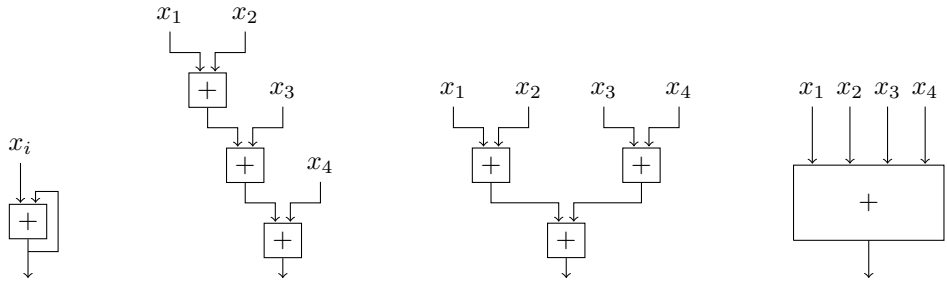
\begin{figure*}
  \centering
  \subfloat{
      \begin{tikzpicture}
        \node[rectangle,draw] (add1) at (0,0) {$+$};
        \node (in1) at (-0.1, 1) {$x_i$};
        
        \draw[->, black] (in1) -- (-0.1, 0.25);
        \draw[->, black] (add1) -- (0, -0.75);
        \draw (0, -0.4) -- (0.4, -0.4)
        -- (0.4, 0.4) -- (0.1, 0.4)
        [->, black] -- (0.1, 0.25);
        
      \end{tikzpicture}}
    \hspace{1cm}
    \subfloat{
      \begin{tikzpicture}
        \node[rectangle,draw] (add1) at (0,0) {$+$};
        \node (in1) at (-0.5, 1) {$x_1$};
        \node (in2) at (0.5, 1) {$x_2$};
        
        \draw (in1) -- (-0.5, 0.5)
        -- (-0.1, 0.5)
        [->,black] -- (-0.1, 0.25);

        \draw (in2) -- (0.5, 0.5)
        -- (0.1, 0.5)
        [->,black] -- (0.1, 0.25);
        
        \node[rectangle,draw] (add2) at (0.5,-1) {$+$};
        \node (in3) at (1, 0) {$x_3$};
        
        \draw (add1) -- (0, -0.5)
        -- (0.4, -0.5)
        [->,black] -- (0.4, -0.75);

        \draw (in3) -- (1, -0.5)
        -- (0.6, -0.5)
        [->,black] -- (0.6, -0.75);

        \node[rectangle,draw] (add3) at (1,-2) {$+$};
        \node (in4) at (1.5, -1) {$x_4$};
        
        \draw (add2) -- (0.5, -1.5)
        -- (0.9, -1.5)
        [->,black] -- (0.9, -1.75);

        \draw (in4) -- (1.5, -1.5)
        -- (1.1, -1.5)
        [->,black] -- (1.1, -1.75);

        \draw[->] (add3) -- (1, -2.5);

      \end{tikzpicture}}
    \hspace{1cm}
    \subfloat{
      \begin{tikzpicture}
        \node[rectangle,draw] (add1) at (0,0) {$+$};
        \node (in1) at (-0.5, 1) {$x_1$};
        \node (in2) at (0.5, 1) {$x_2$};

        \draw (in1) -- (-0.5, 0.5)
        -- (-0.1, 0.5)
        [->,black] -- (-0.1, 0.25);

        \draw (in2) -- (0.5, 0.5)
        -- (0.1, 0.5)
        [->,black] -- (0.1, 0.25);

        \node[rectangle,draw] (add2) at (2,0) {$+$};
        \node (in3) at (1.5, 1) {$x_3$};
        \node (in4) at (2.5, 1) {$x_4$};

        \draw (in3) -- (1.5, 0.5)
        -- (1.9, 0.5)
        [->,black] -- (1.9, 0.25);

        \draw (in4) -- (2.5, 0.5)
        -- (2.1, 0.5)
        [->,black] -- (2.1, 0.25);

        \node[rectangle,draw] (add3) at (1,-1) {$+$};

        \draw (add2) -- (2, -0.5)
        -- (1.1, -0.5)
        [->,black] -- (1.1, -0.75);

        \draw (add1) -- (0, -0.5)
        -- (0.9, -0.5)
        [->,black] -- (0.9, -0.75);

        \draw[->,black] (add3) -- (1, -1.5);
        
      \end{tikzpicture}}
    \hspace{1cm}
    \subfloat{
      \begin{tikzpicture}
        \draw (0,0) -- (2,0) -- (2,1) -- (0,1) -- (0,0);
        \node (text) at (1,0.5) {$+$};
        \node (in1) at (0.25, 2) {$x_1$};
        \node (in2) at (0.75, 2) {$x_2$};
        \node (in3) at (1.25, 2) {$x_3$};
        \node (in4) at (1.75, 2) {$x_4$};
        \draw[->,black] (in1) -- (0.25, 1.02);
        \draw[->,black] (in2) -- (0.75, 1.02);
        \draw[->,black] (in3) -- (1.25, 1.02);
        \draw[->,black] (in4) -- (1.75, 1.02);
        \draw[->,black] (1, 0) -- (1, -0.5);
      \end{tikzpicture}}
    \caption{Various methods for implementing 4-term addition in hardware. The left-most method, which is present in virtually all modern hardware, is to iterate over an IEEE 754 floating-point addition multiple times to sum a vector of numbers $x$; with this method, every iteration requires a new addition instruction to be fetched and executed. The second and third approaches demonstrate two ways to utilize multiple IEEE 754 floating-point adders---the first one simply chains four adders while the second one creates an adder tree to reduce the latency. In these three methods, the order of addition can be changed by rearranging the inputs. The remaining method is a black box which contains a specialized hardware design for performing the summation, not necessarily using the operations outlined in IEEE 754 and usually enforcing a specific order of summation. The first three approaches correspond to Class III adders and the last one encapsulates Classes I, II, and IV (Section~\ref{sec:classes}).}
    \label{fig:adders}
  \end{figure*}

  Hardware designs of multi-term adders (Algorithm~\ref{alg:multi-operand-sum}), including those that are part of dot product and matrix multiply hardware, can be classified into four main categories.
  Some of the ways to build multi-term adders are demonstrated in the high-level diagrams of Figure~\ref{fig:adders}.

  We will use a term \emph{fused}. From the user's perspective, in most cases it means that only one rounding error is incurred in the computation, except where stated otherwise.

  \subsubsection{Class I: Adders that use long accumulators}
  \label{sec:class1}

One approach is to retain all the bits in the summation of multiple values and round it once at the end.
This is advocated by Kulisch~\cite[Sec.~8]{kuli13}.
See the design-space exploration by Uguen~and~de~Dinechin~\cite{ugdi17} for a detailed analysis of the costs.
An implementation by Koenig,~Bachrach,~and Asanovi\'{c}~\cite{kbba17} used 4288 bits internally for multiplying and accumulating binary64 values exactly.
These kind of multi-term adders are fused because they contain only one rounding across the whole computation; however, keeping all of the bits can be expensive in circuit area and latency due to carry propagation.

While not directly a multi-term adder, the accumulator of binary16 products by Brunie~\cite{brun17} uses an exact 80-bit fixed-point internal format and therefore can be used to implement fused dot products or matrix multiplies with a single rounding error.
Brunie~\cite{brun20} also proposed an architectural extension to CPUs which adds basic linear algebra instructions that work on matrices packed in general-purpose vector registers.
The papers suggest that whether the accumulation is exact or not depends on the precision of input arguments and whether it is feasible to build the hardware required to accumulate exactly.

Burgess,~Goodyer,~Hinds,~and~Lutz~\cite{bghl19} propose High-Precision Anchored (HPA) accumulators for accurate floating-point summation and suggest extensions to ARM Scalable Vector Extension (SVE) units to efficiently support them.
The HPA number format would be used for computation, while the input and output data would still be in an IEEE 754 format, such as the binary64, which requires conversion to and from the HPA.
The main concept is to convert a floating-point value into the HPA format by placing different parts of the significand into different registers, based on the significance of the bits when taking into account the exponent.
HPA numbers are therefore stored in a wider format, across multiple registers.
Similar to fixed-point representation, scaling is applied to choose the balance between the range and precision.
HPA numbers can be configured to calculate correctly rounded sums of floating-point values, therefore we classify this software-hardware concept in the Class I of multi-term adders.

\subsubsection{Class II: Adders that achieve correct rounding without the use of long accumulators}
\label{sec:class2}

Tenca~\cite{tenc09} provides an optimized algorithm for finding the largest exponent and choosing the amounts to shift the significands by.
The general algorithm is not changed, however, with the main steps in Algorithm~\ref{alg:multi-operand-sum} still present.
Tenca~\cite{tenc09} actually proposes a fused design for performing $\fl{a+b+c}$ with only one rounding error, which complicates the problem in that bits that are shifted out in the significand alignment step have to be tracked.
This is not what is implemented in the hardware; for example in the A100, which we used for the demonstration above, $n-1$ rounding or truncation errors are incurred when aligning and adding significands in limited precision.

Sohn~and~Swartzlander propose a series of fused operators, such as a two-term dot product~\cite{sosw13}, a three-term adder~\cite{sosw14}, and a four-term dot product~\cite{sosw16}.
A generalized $n$-term fused dot product architecture is explored by Tao~et~al.~\cite{tdxn13}.
The goal is to implement fused operations, meaning that computation is not performed through standard hardware multipliers and adders joined together, but by making a new optimized unit without the intermediate rounding and normalization steps.
Since these operators are fused, we do not expect non-monotonicity to appear when computing with them.

Multi-term adders also appear is in the hardware designs of the fast Fourier transform (FFT) operation.
Swartzlander~and~Saleh~\cite{swsa12} utilize a two-term adder for implementing a fused two-term dot product while Kaivani~and~Ko~\cite{kako15} discuss an implementation of FFT for which a five-term floating-point adder was used.
The authors mention not using intermediate normalization and rounding blocks by implementing a custom-design five-term adder, which in turn allowed to reduce the area of the FFT design.
This 5-term adder is fused, but it is not specified what the accumulator's size is and how the sticky bits are computed to replicate the exact accumulation.

A generalized algorithm by Boldo,~Gallois-Wong,~and~Hillaire~\cite{bgh20} computes a correctly rounded dot product of a series of fixed-point numbers with varied precisions.
Instead of using a long accumulator that could cover all possible values, the algorithm uses some number of extra bits and round-odd rounding mode.

\subsubsection{Class III: Adders that replicate software behaviour}
\label{sec:class3}

Kim~and~Kim~\cite{kiki09} propose a 4-term dot product unit without the intermediate normalization of sums but with intermediate rounding performed in correct places (by taking into account where the most significant nonzero bit is) to assure bit reproducible operation compared with an IEEE 754 software implementation.
Even though the monotonicity is not addressed in this work, the implementation should be monotonic as it mimics a software implementation with the correctly rounded elementary operations.
The application space is 3D graphics---for this a 4-term dot product in single precision is particularly useful and this is what the authors explored.
It was noted that ``\emph{the exact bit-level matching between hardware units and software models is more important in 3D graphics than the rounding errors to the real value.}''~\cite[p.~892]{kiki09} as the motivation for performing rounding in the intermediate calculations.

\subsubsection{Class IV: Adders that use limited precision accumulator}
\label{sec:class4}

In machine learning, Kaul~et~al.~\cite{kamk19} discuss a generalized $n$-term dot product hardware design.
It is proposed to split the calculation of the maximum exponent and the differences to all the other exponents into two phases, to reduce the critical path of the design.
For the purposes of our study, the main feature of Kaul~et~al.~\cite{kamk19} design is that it implements the behaviour of Algorithm~\ref{alg:multi-operand-sum}: it aligns the products relative to the product with the maximum exponent, the alignment right-shifts are in limited precision and the addition is performed with extra bits for carries, with a single normalization step of the sum.

Lopes~and~Constantinides~\cite{loco10} have designed a configurable dot product unit which was tested on FPGAs for up to 150 terms and compared it with a basic implementation that uses a tree of multipliers and adders.
The main feature of the design is that internally it uses a configurable precision fixed-point register to accumulate the products in before normalizing and rounding it to produce the floating-point answer.
This design follows the general structure of Algorithm~\ref{alg:multi-operand-sum} with precision growth due to no intermediate normalization (as the authors point out, precision growth allows to avoid overflows), and therefore should be non-monotonic.

Hickmann~et~al.~\cite{hcry20} present a $32\times 32$ matrix multiply accelerator with 16-bit floating-point inputs and 32-bit outputs.
The internal accumulation of products is limited to 37 bits.
One interesting aspect of this work is that the term \emph{fused} is used, but since only the products are exact, not the accumulation of them, this has a different meaning than the Class I/II designs which perform everything as though the computation is exact and rounded once.
Another aspect worth noting is that this design explicitly adds the products $a_ib_i$ of the $32\times 32$ dot product operation before adding their sum to the accumulator $c_i$ in order to reduce the error accumulation when the value in the accumulator is growing in magnitude.

Bertaccini~et~al.~\cite{bpfm22} developed a three-term addition unit for performing dot products or sums of floating-point values of various formats.
The internal accumulator is expanded twice, after each addition, but it is not reported to be an exact accumulator that would cover all right-shift distances in the alignment of addends.

Lee~et~al.~\cite{lasz21} implemented a four-core mixed-precision AI chip which includes a three-term floating-point adder as part of the FMMA (fused multiply--multiply--accumulate) instruction.
The multiplicands are 8-bit floating-point values and the accumulator is 16-bit.
The three-term addition is performed in 16-bit precision followed by normalization and rounding.

\subsubsection{Adders that lie in multiple classes}

Ledoux~and~Casas~\cite{luca22} proposed a hardware generator of general matrix multiply--accumulate (GEMM) accelerators.
The generator is parameterized and provides a choice of numerical formats and the option for setting the sizes of internal accumulators, including making them long accumulators to accumulate products exactly.
Irrespective of the setup of the accumulator, rounding and normalization from the internal hardware numerical format to some chosen standard format is performed at the end, once the whole dot product has been computed.
The work does not mention the use of multi-term adders and implements GEMM accelerators through the accumulation of values by iterating through the hardware.
Nevertheless, in terms of resultant numerical behaviour, this work potentially can generate hardware of classes I and IV listed above.

\subsection{Commercial hardware}

\begin{table*}
  \centering
  \caption{List of devices that contain vector or matrix arithmetic hardware, such as dot product and matrix multiply. In the last column we make a prediction on the class of the multi-term addition based on the available information. $^\dagger$This number is determined only from the H100 whitepaper as no other information is available, to the best of our knowledge, on what inputs tensor cores take at hardware layer; $^*$Two 8-bit floating-point formats are available, one with a 4-bit exponent and a 3-bit significand, and one with a 5-bit exponent and a 2-bit significand; $^\ddagger$Configurable Floating Point 8-bit data type, with programmable bias.}
  \begin{tabular}{lrp{2cm}cccr}
    \toprule
    Year & Device/Architecture & Input formats & Output formats & Multi-term adder terms & Throughput (max) & Predicted class \\
    \midrule
    2016 & Google TPUv2 \cite{jhag21} & bfloat16 & binary32 & - & 46 Tflop/s & Class III \\
    2017 & Google TPUv3 \cite{jhag21} & bfloat16 & binary32 & - & 123 Tflop/s & Class III \\
    2018 & NVIDIA V100 & binary16 & binary32 & 5 & 125 Tflops/s & Class IV \\
    2018 & Graphcore IPU1 & binary16 & binary32 & - & 125 Tflop/s & - \\
    2020 & Google TPUv4i \cite{jhag21} & bfloat16 & binary32 & 4 & 138 Tflop/s & Class IV \\
    2020 & Graphcore IPU2 & binary16 & binary32 & - & 250 Tflop/s & - \\
    2020 & NVIDIA A100 \cite{nvid20a} & bfloat16,\hspace{1cm} binary16, binary64, TensorFloat-32 & binary32/64 & 9 & 312 Tflop/s & Class IV \\
    2021 & AMD MI250X \cite{amd21a} & bfloat16,\hspace{1cm} binary16, binary32, binary64 & - & 5 & 383 Tflop/s & - \\
    2021 & GroqChip \cite{aklk22} & binary16 & binary32 & 160 & 188 Tflops/s & Class I or II \\
    2022 & NVIDIA H100 & 8-bit$^*$, \hspace{1cm} bfloat16,\hspace{1cm} binary16, binary64, TensorFloat-32 & binary32, binary64 & 17$^{\dagger}$ & 1978.9 Tflop/s & -\\
    2022 & Intel Ponte Vecchio \cite{gksi22} & bfloat16,\hspace{1cm} binary16, binary64, TensorFloat-32 & - & - & - & - \\
    2016-2022 & Intel AMX \cite{intl22} & binary16 & binary32 & 17 & -  & Class III \\
    2023 & Tesla Dojo \cite{tswa23} & CFP8$^\ddagger$, bfloat16 & binary32 & 8 & 360 Tflops/s & Class IV \\
    \bottomrule
  \end{tabular}
  \label{table:mma-hw}
\end{table*}

Table~\ref{table:mma-hw} lists hardware that is available and contains multi-term floating-point addition, as part of dot product and matrix operations.

Most of the companies do not provide information on low level numerical hardware details which makes it hard to classify them and say what numerical features, such as rounding and monotonicity, are present.
An attempt can be made at deducing some of the features from the numerical results that are obtained when computing on these devices, as demonstrated with NVIDIA V100 GPUs by Hickann~and~Bradford~\cite{hibr19} and with V100, T4, and the A100 by Fasi~et~al.~\cite{fhmp21}.
Fasi~et~al.~\cite{fhlm23} have subsequently demonstrated, through error analysis, that low level features such as rounding can become significant when multiplying matrices with matrix arithmetic hardware.

\subsection{Our contributions}

In summary, the present manuscript's contributions to computer arithmetic and beyond are three-fold:
\begin{enumerate}
\item We identify conditions in which floating-point operations that involve multi-term addition can be non-monotonic---this allows to explain surprising numerical results of some of the commercial hardware and construct tests that can be used to look for non-monotonicity of summation within the vector and matrix operations in hardware devices.
  We show that Class IV operations are not monotonic, but Class I-III are and provide proofs in each case.
\item We demonstrate various applications that may be impacted.
\item We propose ways to modify architectures that contain units for adding multiple floating-point numbers in order for the computed approximations of sums to be monotonic.
\item The paper acts as a survey of hardware designs that are and are not monotonic, and fills an important gap in the literature by addressing the monotonicity of multi-term addition.
\end{enumerate}

\section{Background}

\subsection{Floating-point representation and arithmetic}
\label{sec:fp-arith}

We will be using the following IEEE-compliant floating-point
systems and properties.
A binary floating-point number $x$ has the form $(-1)^s \times m \times 2^{e-p+1}$, where $s$ is the sign bit, $p$ is the precision, $m \in [0, \; 2^p-1]$ is the integer significand, and $e \in [e_{\min}, \; e_{\max}]$, with $e_{\min}=1 - e_{\max}$, is the integer exponent.
In order for $x$ to have a unique representation, the number system is \textit{normalized} so that the most significant bit of $m$ is set to~1 if $|x| \ge 2^{e_{\min}}$.
Therefore, all floating-point numbers with $m \geq 2^{p-1}$ are normalized.
Numbers below the smallest normalized number $2^{e_{\min}}$ in absolute value are called \textit{subnormal numbers}, and are such that $e=e_{\min}$ and $0 < m < 2^{p-1}$.
The set of floating-point numbers is denoted by $\mathbb{F}$.

The results of floating-point operations may not be normalized and must be normalized by shifting the significand left or right until it falls within $[2^{p-1}, \; 2^{p}-1]$ and adjusting the exponent accordingly.
Those numbers that cannot be normalized in such a way, due to requiring exponents lower than the minimum exponent value, form subnormal numbers.

The IEEE 754 standard for floating-point arithmetic provides a limited set of requirements for reduction operations such as multi-term addition~\cite[Sec.~9.4]{ieee19} or vector and matrix operations: a particular order of adding the partial sums is not required, and the use of arbitrary precision accumulator is allowed.
The standard does not specify: 1) whether this internal format should be normalized after each addition, 2) which rounding mode should be used, and 3) when the rounding should happen.
IEEE 754-2019~\cite{ieee19} specifies six rounding modes for various purposes.
These requirements provide a lot of freedom in implementation choices and can potentially introduce a wide array of different numerical behaviours.
We identified four main classes of algorithms that are present in literature and made their way into various devices~(Section~\ref{sec:classes}).

\subsection{Monotonicity of IEEE 754 arithmetics}
\label{sec:monotonicity-ieee754}

In this section we demonstrate a few results about the approximation of a sum computed using the 2-term correctly rounded addition operation~\cite{ieee19}.

\subsubsection{Rounding}
\label{sec:rn}

The default rounding mode of IEEE 754 arithmetics is round-to-nearest ties-to-even (RN) and it can be shown that it is monotonic.
Take $x,y \in \mathbb{R}$, $x \leq y$, and two neighbouring floating-point values in some precision-$p$ arithmetic over $\mathbb{F}$, $a$ and $b$.
Assume that $x$ and $y$ lie between $a$ and $b$ such that $a \leq x \leq y \leq b$.
Normalization of the floating-point significand \cite{ieee19} followed by rounding $x \in \mathbb{R}$ to $\mathbb{F}$ is denoted by \fl{x} and in this case $x$ and $y$ can be rounded to $a$ or $b$.

The definition of round-to-nearest does not allow non-monotonic behaviour: since $a \leq x,y \leq b$ and $x \leq y$, $\fl{x} \leq \fl{y}$ because $a = \fl{y} < \fl{x} =b$ contradicts the definition of round-to-nearest \cite{ieee19}.

Other IEEE 754 \cite{ieee19} rounding modes can also be shown to preserve monotonicity: round-toward-zero (RZ), round-toward-negative (RD) and round-toward-positive (RU) are all monotonic because when they are used, $\fl{x}=\fl{y}=a$ or $\fl{x}=\fl{y}=b$.

\subsubsection{Addition of two operands}
\label{sec:addition-of-two-operands}

Now we look at $f_{\mathrm{add}}(x, y) = \fl{x+y}$ where as per IEEE 754, $x+y$ is computed as though in infinite precision arithmetic and then rounded to the nearest value in some $\mathbb{F}$.
Take $a$, $b$, and $c > b$, $s_1 = a + b$, and $s_2 = a + c$, then $s_1 < s_2$.
Using the same approach as in Section~\ref{sec:rn} we can show that $\fl{s_1} \leq \fl{s_2}$ and therefore that the addition of two operands in IEEE 754 arithmetics is also a monotonic function.

\begin{theorem}
  Addition of two operands, $\fl{x+y}$ with $x,y \in \mathbb{R}$, computed using the addition operation as defined in the IEEE 754 is monotonic with round-to-nearest, round-towards-zero, round-toward-negative, and round-toward-positive.
\end{theorem}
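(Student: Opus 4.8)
The plan is to exhibit $f_{\mathrm{add}}(x,y) = \fl{x+y}$ as a composition of two maps that the excerpt has already shown to be monotonic. By the IEEE 754 definition the infinite-precision sum $g(x,y) = x+y$ is formed first and the rounding map $\fl{\cdot}$ is applied afterwards, so $f_{\mathrm{add}} = \fl{\cdot} \circ g$. Since a composition of nondecreasing maps is nondecreasing, it suffices to verify that each factor is monotonic and then compose.

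For the inner map, the joint monotonicity of exact addition follows from the same elementary argument used for \eqref{eq:sum} in the introduction: if $x_1 \le x_2$ and $y_1 \le y_2$ then $x_1 + y_1 \le x_2 + y_2$. Writing $s_1 = x_1 + y_1$ and $s_2 = x_2 + y_2$, we thus have $s_1 \le s_2$ in $\mathbb{R}$, and the theorem reduces to showing $\fl{s_1} \le \fl{s_2}$ for each of the four rounding modes. This is precisely the monotonicity of rounding established in Section~\ref{sec:rn}.

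The one place that needs more than a direct citation---and the main obstacle---is that Section~\ref{sec:rn} proves monotonicity of rounding only for two reals that already lie in a common interval $[a,b]$ between consecutive floating-point values, whereas $s_1$ and $s_2$ may straddle several grid points. I would therefore upgrade the local statement to a global one. Because every floating-point number rounds to itself, I can insert the finitely many floating-point values contained in $[s_1, s_2]$ as intermediate points $s_1 \le t_1 \le \cdots \le t_k \le s_2$, apply the local result of Section~\ref{sec:rn} on each consecutive subinterval (on which any two points share an inter-grid interval), and chain the resulting inequalities $\fl{s_1} \le \fl{t_1} \le \cdots \le \fl{s_2}$ by transitivity. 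Composing with the inner map then gives $\fl{x_1 + y_1} \le \fl{x_2 + y_2}$, which is the claim.

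A final remark is that the boundary cases are consistent with this reasoning: results of overflowing magnitude are mapped by the IEEE 754 conventions to $\pm\infty$ or to the largest finite magnitude in an order-preserving manner, and subnormal results are handled automatically, since the grid points used in the chaining argument already include the subnormal numbers.
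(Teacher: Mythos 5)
Your proposal takes essentially the same route as the paper's own proof: decompose $\fl{x+y}$ into the exact (infinite-precision) sum, which is monotonic, followed by the rounding map, whose monotonicity is supplied by Section~\ref{sec:rn}. Your additional chaining step---inserting the floating-point grid points between $s_1$ and $s_2$ to upgrade the one-gap rounding result of Section~\ref{sec:rn} to arbitrary $s_1 \le s_2$---is correct and fills in a detail the paper leaves implicit, but it is a refinement of the same argument rather than a different approach.
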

\begin{proof}
  Since the addition in IEEE 754 arithmetics is first performed as though in infinite precision, the computed quantities will be $\fl{s_1}$ and $\fl{s_2}$.
  Since $s_1 < s_2$ the reasoning is equivalent to that of Section~\ref{sec:rn} and $\fl{s_1} \leq \fl{s_2}$.
\end{proof}

\subsubsection{Addition of three or more operands}

Multi-term addition in IEEE 754 floating-point arithmetics is also monotonic.
Let \(x_1\), \ldots, \(x_n \in \mathbb{R}\).
Consider $\fl{\fl{x_1+x_2}+x_3}$.
Take $x_3=a$ and set $s_1 = \fl{x_1+x_2}+a$.
Then take $x_3=a+\varepsilon$, with $\varepsilon > \ulp{a}/2$, where $\ulp{a}=2^{e_a-p+1}$ is the size of the gap between $a$ and the following floating-point number, and set $s_2 = \fl{x_1+x_2}+a+\varepsilon$.
Since $a+\varepsilon > a$ we get that $s_1 < s_2$ and we can show that $\fl{\fl{x_1+x_2}+x_3}$ is monotonic by showing that $\fl{s_1} \leq \fl{s_2}$ is, using the reasoning in Sec.~\ref{sec:addition-of-two-operands}.

This can be repeated for showing that $\fl{\fl{\cdots \fl{x_1+x_2} + \cdots} + x_n}$ is also monotonic, and when any of the addends is increased, not necessarily the last one.

\begin{theorem}
  \label{thm:ieee754-sum-monotonic}
  Summation $\sum_i^nx_i$, with $x_i \in \mathbb{R}$ and $n \geq 3$, computed using the floating-point addition operation as defined in the IEEE 754 is monotonic with round-to-nearest, round-towards-zero, round-toward-negative, and round-toward-positive, in any ordering, such as $\fl{\fl{\cdots \fl{x_1+x_2} + \cdots}+x_n}$.
\end{theorem}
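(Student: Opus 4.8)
The plan is to reduce the multi-term case to the already-established two-operand result (the theorem of Section~\ref{sec:addition-of-two-operands}) by observing that the quantity actually computed is nothing but an iterated composition of the single elementary map $\phi(u,v) := \fl{u+v}$, and that composing maps which are nondecreasing in each argument again produces a map nondecreasing in each argument. First I would isolate this closure property as a one-line lemma: if $h(u,v)$ is nondecreasing in $u$ and in $v$, and $f,g$ are each nondecreasing in every argument, then $x \mapsto h(f(x),g(x))$ is nondecreasing in every argument. The proof fixes all coordinates but one, increases it, notes that $f$ and $g$ then cannot decrease, and appeals to monotonicity of $h$ in each slot. Monotonicity of rounding (Section~\ref{sec:rn}) and of $\phi$ itself are exactly what the preceding results supply.

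Next I would make the phrase ``in any ordering'' precise by modelling a fixed summation order as a binary tree whose $n$ leaves carry the inputs $x_1,\dots,x_n$ and whose every internal node outputs $\phi$ of its two children's values; the left-associated chain $\fl{\fl{\cdots}+x_n}$ and the adder tree of Figure~\ref{fig:adders} are then simply two instances of this picture. By the two-operand theorem, the value emitted by each internal node is, as a function of its two child-values, nondecreasing in each of them, and this holds uniformly for RN, RZ, RD, and RU since that theorem was proved for all four modes simultaneously.

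I would then run a structural induction on the tree, equivalently an induction on $n$ with the two-operand theorem as the base case $n=2$. At an internal node the output is $\phi(L,R)$, where by the inductive hypothesis $L$ and $R$ are each nondecreasing in all the leaves they depend on; the lemma immediately gives that the node's output is nondecreasing in every one of its leaves, and in particular the root is nondecreasing in each $x_i$. Increasing a single $x_i$ while holding the others fixed therefore cannot decrease $s_n$, which is exactly the asserted monotonicity; and because the base case covers all four rounding modes, so does the conclusion.

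This argument contains no genuinely hard step; the only point demanding care is the bookkeeping for arbitrary orderings, namely phrasing the induction over the tree structure (or over $n$ with an arbitrarily chosen last-combined pair) rather than only over the displayed chain, so that the ``any ordering'' claim is fully discharged. I would also remark that the threshold $\varepsilon > \ulp{a}/2$ appearing in the informal lead-in plays no role here: monotonicity is the non-strict statement $\fl{s_1}\le\fl{s_2}$, so the bare inequality $s_1<s_2$ produced by increasing one addend already suffices, and the strict-increase condition on $\varepsilon$ is irrelevant to the conclusion.
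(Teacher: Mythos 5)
Your proposal is correct, and it takes a genuinely different route from the paper's own proof. The paper fixes the left-associated chain, lets $j$ be the index of the modified addend, and runs an induction along the sequence of partial sums: $s_i'=s_i$ for $i<j$; then $s_j'=\fl{s_{j-1}+x_j'}\ge\fl{s_{j-1}+x_j}=s_j$ by the two-operand result; then $s_i'=\fl{s_{i-1}'+x_i}\ge\fl{s_{i-1}+x_i}=s_i$ for $j<i\le n$; arbitrary orderings are then dispatched with a single sentence, ``the proof is analogous for other orderings.'' Your structural induction over the binary evaluation tree, supported by the closure lemma that $h(f(\cdot),g(\cdot))$ is nondecreasing whenever $h$, $f$, $g$ are, is in fact exactly the alternative the paper sketches in the remark immediately following its proof (credited there to an anonymous referee): one rounded operation per internal node, monotonicity propagated level by level up to the root. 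What your version buys is a fully rigorous discharge of the ``in any ordering'' clause, which is the thinnest point of the paper's argument, together with uniformity over the four rounding modes inherited from the base case; it is also slightly more general than needed, since your lemma tolerates $f$ and $g$ sharing variables even though in a summation tree each leaf occurs once. What the paper's version buys is concreteness: its recurrence exhibits explicitly that only the partial sums downstream of the modified addend change, which your abstraction hides. Your closing observation is also sound: the condition $\varepsilon>\ulp{a}/2$ appears only in the informal lead-in before the theorem, and since the theorem asserts non-strict monotonicity, $x_j'\ge x_j$ is all the argument requires.
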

\begin{proof}
  First, consider computing $x_1+x_2+\cdots + x_n$ recursively as $\fl{\fl{\cdots \fl{x_1+x_2} + \cdots}+x_n}$.
  Then, consider increasing the last addend $x_n$.
  We can define the partial sum of the first $n-1$ addends as $s_p=\fl{\cdots \fl{x_1+x_2}\cdots)+x_{n-1}}$.
  Then consider two cases, $s_1=\fl{s_p+x_n}$ and $s_2=\fl{s_p+(x_n+\varepsilon)}$.
  Then $s_1 \leq s_2$ through the result in Section~\ref{sec:rn}.

  Secondly, we can check what can happen when any of the addends $x_i$ for $1 \leq i \leq n-1$ are increased before the sum is computed.

  Take $j$ to be the index of an addend which we modify.
  Let
\begin{equation*}
x_i' =
\begin{cases}
x_i + \varepsilon, \quad &i = j,\\
x_i, \quad &i \neq j,
\end{cases}
\end{equation*}
and define the partial sums as
\begin{equation*}
s_i =
\begin{cases}
\fl{x_1},\quad &i = 1,\\
\fl{s_{i-1} + x_i}, \quad &2 \le i \le n,
\end{cases}
\end{equation*}
and
\begin{equation*}
s_i' =
\begin{cases}
\fl{x_1'},\quad &i = 1,\\
\fl{s'_{i-1} + x_i'}, \quad &2 \le i \le n.
\end{cases}
\end{equation*}
We need to prove that \(s_n' \ge s_n\). If \(i < j\), then \(s_i' = s_i\). Using Theorem~\ref{thm:ieee754-sum-monotonic} and the fact that \(x_j' \ge x_j\) we can conclude that
\begin{equation*}
s_j' = \fl{s_{j-1}' + x_j'} = \fl{s_{j-1} + x_j'} \ge \fl{s_{j-1} + x_j} = s_j.
\end{equation*}
For \(j < i \le n\), the result follows by induction:
\begin{equation*}
s_i' = \fl{s_{i-1}' + x_i'} = \fl{s_{i-1}' + x_i} \ge \fl{s_{i-1} + x_i} = s_i.
\end{equation*}
The proof is analogous for other orderings of evaluation of the sum.
\end{proof}

An anonymous referee has pointed out that each ordering can be represented by a tree with vertices representing rounded operations.
For each ordering we need two trees, one with and one without the $\varepsilon$ update to one of the addends $x_j$.
Similarly to the proof above, we can prove monotonicity for each operation and use induction to prove the monotonicity at all levels of the tree as the expression is being computed.
This would allow us to confirm that $s_n' \geq s_n$ which are at the roots of the trees.

\subsubsection{Multiplication}

Now we look at $f_{\mathrm{mul}}(x, y) = \fl{x \times y}$ where as per IEEE 754, $x \times y$ is computed as though in infinite precision arithmetic and then rounded to the nearest value in some $\mathbb{F}$.
Take $a$, $b$, $c > b$, $m_1 = a \times b$, and $m_2 = a \times c$.
If $a>0$, $m_1 < m_2$ (multiplication is monotonic increasing).
If $a < 0$ we have $m_1 > m_2$ (monotonic decreasing).

\begin{theorem}
  \label{thm:mult}
  Multiplication of two operands, $\fl{x\times y}$ with $x \in \mathbb{R}$ and $y \in \mathbb{R}$, computed using the floating-point multiplication operation as defined in the IEEE 754 is monotonic with round-to-nearest, round-towards-zero, round-toward-negative, and round-toward-positive.
\end{theorem}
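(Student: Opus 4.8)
The plan is to mirror the structure of the two-operand addition theorem, exploiting the fact that IEEE~754 multiplication first forms the exact product $x \times y$ as though in infinite precision and only then normalizes and rounds it to $\mathbb{F}$. Fixing one operand, say $a$, and increasing the other from $b$ to $c > b$, the computed results are $\fl{m_1}$ and $\fl{m_2}$ with $m_1 = a \times b$ and $m_2 = a \times c$ taken exactly. Since the rounding map $\fl{\cdot}$ was already shown to be monotonic for all four rounding modes in Section~\ref{sec:rn}, it suffices to determine the order of $m_1$ and $m_2$ in $\mathbb{R}$ and then apply that monotonicity; the error-propagation reasoning needed for sums is unnecessary here because the product is exact before rounding.

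The only genuine difference from the additive case is that the direction of monotonicity of exact multiplication depends on the sign of the fixed operand, so I would split into three cases exactly as set up in the paragraph preceding the statement. If $a > 0$ then $m_1 < m_2$, and monotonicity of rounding gives $\fl{m_1} \leq \fl{m_2}$, so $f_{\mathrm{mul}}$ is monotonically nondecreasing in the varied argument. If $a < 0$ then $m_1 > m_2$, and the same rounding monotonicity gives $\fl{m_1} \geq \fl{m_2}$, so $f_{\mathrm{mul}}$ is monotonically nonincreasing. If $a = 0$ then $m_1 = m_2 = 0$ and $\fl{m_1} = \fl{m_2}$ trivially. By commutativity of multiplication the identical conclusion holds when the roles of the two operands are swapped, so the function is monotonic in each argument separately.

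I expect no serious obstacle: the heavy lifting is already carried by the rounding-monotonicity result of Section~\ref{sec:rn}. The one point requiring care is purely definitional, namely making explicit that, unlike summation, \emph{monotonic} for multiplication means nondecreasing or nonincreasing according to the sign of the held-fixed factor rather than nondecreasing unconditionally. Stating the three sign cases and observing that each to-nearest and directed rounding preserves (respectively flips) the inequality is all that is needed to close the proof.
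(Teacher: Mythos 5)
Your proposal is correct and follows essentially the same route as the paper's proof: both reduce the claim to the exactness of the infinite-precision product $m_1 = a\times b$, $m_2 = a\times c$, apply the rounding monotonicity established in Section~\ref{sec:rn}, and split on the sign of the fixed operand to get $\fl{m_1} \leq \fl{m_2}$ for $a > 0$ and the reversed inequality for $a < 0$. Your explicit handling of the trivial case $a = 0$ and your remark that ``monotonic'' here means nondecreasing or nonincreasing depending on the sign of the held-fixed factor are minor clarifications the paper leaves implicit, not a different argument.
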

\begin{proof}
  Since the multiplication in IEEE 754 arithmetics is first performed as though in infinite precision, the computed quantities will be $\fl{m_1}$ and $\fl{m_2}$.
  Since $m_1 < m_2$ the reasoning is equivalent to that of Section~\ref{sec:rn} and $\fl{m_1} \leq \fl{m_2}$.
  The proof is analogous for $a<0$ which gives $\fl{m_1} \geq \fl{m_2}$.
\end{proof}

\begin{theorem}
  The inner product of column vectors $a, b \in \mathbb{R}^n$, $a^Tb$ , computed using the floating-point multiplication and addition operations as defined in IEEE 754 with round-to-nearest, round-towards-zero, round-toward-negative, and round-toward-positive is monotonic for any ordering, such as $\fl{\cdots\fl{\fl{a_1 \times b_1}+\fl{a_2 \times b_2}} + \cdots + \fl{a_n \times b_n}}$.
\end{theorem}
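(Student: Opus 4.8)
The plan is to exploit the two-stage structure of the computation: the inner product is evaluated by first forming the $n$ rounded products $p_i = \fl{a_i \times b_i}$ and then summing those products with the IEEE 754 addition operation. Monotonicity will therefore follow by composing the monotonicity of multiplication (Theorem~\ref{thm:mult}) with the monotonicity of multi-term summation (Theorem~\ref{thm:ieee754-sum-monotonic}), so almost no new work is required beyond organising these two results correctly.

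First I would fix a single scalar input to vary, say $a_j$; the argument for varying some $b_j$ is identical by the commutativity of $\fl{a \times b}$. The key observation is that among the $n$ products only $p_j = \fl{a_j \times b_j}$ depends on $a_j$, while every other product $p_i$ with $i \neq j$ is left unchanged. Replacing $a_j$ by some $a_j' \geq a_j$ thus replaces the single summand $p_j$ by $p_j' = \fl{a_j' \times b_j}$ and leaves all remaining summands fixed.

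Next I would invoke Theorem~\ref{thm:mult} to pin down the direction in which $p_j$ moves: if $b_j > 0$ then $p_j' \geq p_j$, if $b_j < 0$ then $p_j' \leq p_j$, and if $b_j = 0$ then $p_j' = p_j = 0$. In each case exactly one addend of the sum has changed, and it has changed monotonically in a single direction. Applying Theorem~\ref{thm:ieee754-sum-monotonic}, which guarantees that increasing one addend does not decrease the computed sum (and, read in reverse, that decreasing one addend does not increase it) for any ordering of the summation, I conclude that the computed inner product is monotonically nondecreasing in $a_j$ when $b_j > 0$ and monotonically nonincreasing when $b_j < 0$. For the small cases $n = 1$ and $n = 2$ the summation step is covered by the two-operand addition result rather than Theorem~\ref{thm:ieee754-sum-monotonic}, but the composition argument is otherwise unchanged.

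The main point requiring care is that the theorem cannot assert a single uniform direction of monotonicity in each scalar argument: whether raising $a_j$ raises or lowers the inner product is governed by the sign of the complementary factor $b_j$, so the statement must be understood as monotonicity with respect to the appropriate direction for each variable. Once this sign bookkeeping is made explicit, the rest is a clean composition of two monotone maps, and because Theorem~\ref{thm:ieee754-sum-monotonic} already covers every evaluation order of the summation, no separate treatment of the various orderings of the dot product is needed.
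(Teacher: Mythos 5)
Your proposal is correct and follows essentially the same route as the paper, whose proof is a one-line composition of the monotonicity of scalar multiplication (Theorem~\ref{thm:mult}) with the monotonicity of the $n$-term sum (Theorem~\ref{thm:ieee754-sum-monotonic}). Your additional sign bookkeeping---that the direction of monotonicity in $a_j$ is governed by the sign of $b_j$---is a worthwhile precision that the paper leaves implicit, but it does not change the underlying argument.
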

\begin{proof}
  The proof follows from the monotonicity of the scalar multiplication (Theorem~\ref{thm:mult}) and the monotonicity of the $n$-term sum (Theorem~\ref{thm:ieee754-sum-monotonic}).
\end{proof}

Since matrix multiplication is comprised of inner products, elementwise monotonicity results from the monotoncicity of the inner product operation.
This proves the monotonicity properties of the units that lie in Class III (Section~\ref{sec:class3}), which implement multi-term addition hardware to mimic the behaviour of IEEE 754, equivalent to normalizing and rounding after every operation.

\subsection{Fused multi-term adders}

\begin{theorem}
  Summation using fused multi-term adders which perform addition as though in infinite precision and then round once, is monotonic.
\end{theorem}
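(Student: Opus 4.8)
The plan is to reduce the claim to two monotonicity facts already established in the excerpt: the monotonicity of exact real summation (shown in the Introduction via the $\varepsilon > 0$ argument) and the monotonicity of the rounding map $\fl{\cdot}$ (shown in Section~\ref{sec:rn} for round-to-nearest, round-toward-zero, round-toward-negative, and round-toward-positive). The crucial observation is that a fused adder, by definition, forms its internal sum as though in infinite precision, so the exact intermediate quantity it rounds is precisely the real sum $S = \sum_{i=1}^n x_i$; the computed result is therefore exactly $\fl{S}$, independently of the order in which the partial contributions are accumulated.

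First I would fix an index $j$ and increase the corresponding addend, setting $x_j' = x_j + \varepsilon$ with $\varepsilon > 0$ and $x_i' = x_i$ for $i \neq j$. Writing $S = \sum_{i=1}^n x_i$ and $S' = \sum_{i=1}^n x_i'$ for the two exact internal sums, the monotonicity of exact addition gives $S' = S + \varepsilon > S$, and by commutativity of the real sum this holds for any choice of $j$.

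Next I would invoke the monotonicity of rounding from Section~\ref{sec:rn}: since $S < S'$ and since the fused adder outputs $\fl{S}$ and $\fl{S'}$ respectively, we obtain $\fl{S} \le \fl{S'}$ for each of the four rounding modes. Because this conclusion is independent of the perturbed index $j$ and of the accumulation order, the fused multi-term adder is monotonic.

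I do not anticipate a genuine obstacle here, as the statement is a composition of two previously proven facts. The only step that requires care is making explicit that ``as though in infinite precision'' really does mean the \emph{exact} real sum is the quantity being rounded, so that no intermediate truncation or limited-precision right-shift can reorder the internal values. This is exactly the feature that separates the fused case from the non-monotonic Class~IV adders, where alignment in limited precision destroys the order that monotonicity of rounding would otherwise preserve.
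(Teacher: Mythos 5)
Your proposal is correct and follows the same route as the paper's own proof: both observe that a fused adder outputs $\fl{S}$ with $S$ the exact real sum, so monotonicity reduces to the monotonicity of rounding established in Section~\ref{sec:rn}. Your version merely spells out the perturbation $x_j' = x_j + \varepsilon$ and the resulting inequality $\fl{S} \le \fl{S'}$ explicitly, which the paper leaves implicit.
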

\begin{proof}
  Since fused multi-term adders compute as though the overall sum is computed in infinite precision and rounded once, $\fl{x_1+x_n+\cdots + x_n}$, they are monotonic due to monotonicity of rounding, showed in Section~\ref{sec:rn}.
\end{proof}
This proves the monotonicity properties of the Class I/II units (Sections~\ref{sec:class1}~and~\ref{sec:class2}).

\section{Results}

In this section we prove a few results about the Class IV multi-term floating-point adders (Section~\ref{sec:class4}).

\subsection{Modified IEEE 754 arithmetics: addition without normalization}
\label{sec:modified-arith}

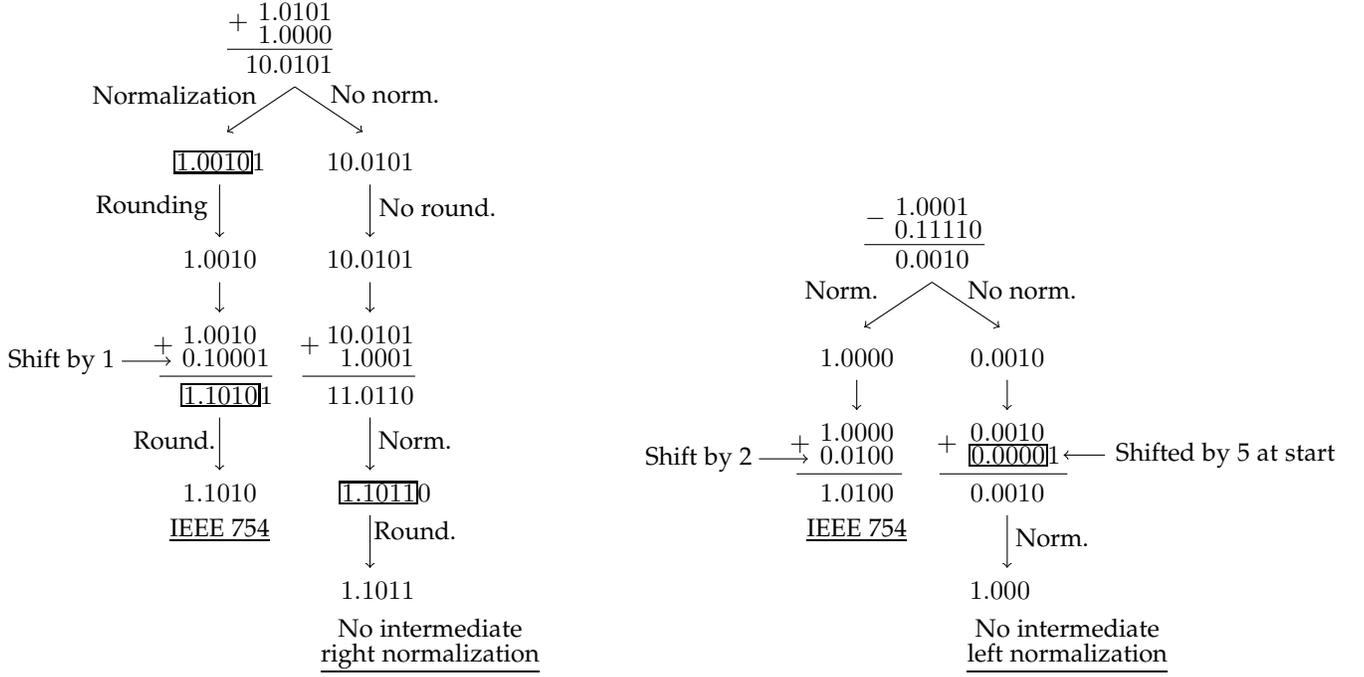
\begin{figure*}
  \centering
  \subfloat{
      \begin{tikzpicture}
        \node (n1) at (0,0) {$1.0101$};
        \node (plus) at (-0.75,-0.15) {$+$};
        \node (n2) at (0,-0.3) {$1.0000$};
        \draw (-0.9,-0.5) -- (0.5,-0.5);
        \node (n3) at (-0.08,-0.7) {$10.0101$};

        \draw [->, black] (0, -1) -- (-0.9, -1.6);
        \node (label) at (-1.6, -1.1) {Normalization};
        \node (n1) at (-1, -2) {$\setlength{\fboxsep}{.2\fboxsep}
\boxed{1.0010}1$};

        \draw [->, black] (-1, -2.3) -- (-1,-3);
        \node (label) at (-1.9, -2.6) {Rounding};
        \node (n1) at (-1, -3.3) {$1.0010$};

        \draw [->, black] (-1, -3.6) -- (-1,-4);
        \node (n1) at (-1,-4.3) {$1.0010$};
        \node (plus) at (-1.75,-4.45) {$+$};
        \node (n2) at (-0.92,-4.6) {$0.10001$};
        \node (label) at (-3.1, -4.65) {Shift by 1};
        \draw [->, black] (-2.3, -4.65) -- (-1.65, -4.65);
        \draw (-1.8,-4.85) -- (-0.3,-4.85);
        \node (n3) at (-0.9,-5.1) {$\setlength{\fboxsep}{.2\fboxsep}\boxed{1.1010}1$};

        \draw [->, black] (-1, -5.4) -- (-1,-6.1);
        \node (label) at (-1.6, -5.7) {Round.};
        \node (n1) at (-1, -6.4) {$1.1010$};
        
        \node (label) at (-1, -6.9) {\underline{IEEE 754}};

        \draw [->, black] (0, -1) -- (0.9, -1.6);
        \node (label) at (1.2, -1.1) {No norm.};
        \node (n1) at (1, -2) {$10.0101$};

        \draw [->, black] (1, -2.3) -- (1,-3);
        \node (label) at (1.9, -2.6) {No round.};
        \node (n1) at (1, -3.3) {$10.0101$};

        \draw [->, black] (1, -3.6) -- (1,-4);
        \node (n1) at (1,-4.3) {$10.0101$};
        \node (plus) at (0.2,-4.45) {$+$};
        \node (n2) at (1.09,-4.6) {$1.0001$};
        \draw (1.6,-4.85) -- (0.1,-4.85);
        \node (n3) at (1,-5.1) {$11.0110$};

        \draw [->, black] (1, -5.4) -- (1,-6.1);
        \node (label) at (1.6, -5.7) {Norm.};
        \node (n1) at (1.2, -6.4) {$\setlength{\fboxsep}{.2\fboxsep}\boxed{1.1011}0$};

        \draw [->, black] (1, -6.7) -- (1,-7.4);
        \node (label) at (1.6, -6.9) {Round.};
        \node (n1) at (1.1, -7.7) {$1.1011$};
        
        \node (label) at (1.8, -8.2) {No intermediate};
        \node (label) at (1.8, -8.6) {\underline{right normalization}};
        
      \end{tikzpicture}}
    \hspace{1cm}
    \subfloat{
      \begin{tikzpicture}
        \node (n1) at (0,0) {$1.0001$};
        \node (plus) at (-0.75,-0.15) {$-$};
        \node (n2) at (0.08,-0.3) {$0.11110$};
        \draw (-0.9,-0.5) -- (0.7,-0.5);
        \node (n3) at (0,-0.7) {$0.0010$};

        \draw [->, black] (0, -1) -- (-0.9, -1.6);
        \node (label) at (-1.2, -1.1) {Norm.};
        \node (n1) at (-1, -2) {$1.0000$};

        \draw [->, black] (-1, -2.3) -- (-1,-2.7);
        \node (n1) at (-1, -3) {$1.0000$};
        \node (plus) at (-1.75,-3.15) {$+$};
        \node (label) at (-3.1, -3.35) {Shift by 2};
        \draw [->, black] (-2.3, -3.35) -- (-1.65, -3.35);
        \node (n2) at (-1,-3.3) {$0.0100$};
        \draw (-1.8,-3.55) -- (-0.4,-3.55);
        \node (n3) at (-1,-3.8) {$1.0100$};

        \node (label) at (-1, -4.3) {\underline{IEEE 754}};

        \draw [->, black] (0, -1) -- (0.9, -1.6);
        \node (label) at (1.2, -1.1) {No norm.};
        \node (n1) at (1, -2) {$0.0010$};

        \draw [->, black] (1, -2.3) -- (1,-2.7);
        \node (n1) at (1, -3) {$0.0010$};
        \node (plus) at (0.2,-3.15) {$+$};
        \node (label) at (3.9, -3.3) {Shifted by 5 at start};
        \draw [->, black] (2.3, -3.3) -- (1.75, -3.3);
        \node (n2) at (1.1,-3.3) {$\setlength{\fboxsep}{.2\fboxsep}\boxed{0.0000}1$};
        \draw (1.8,-3.55) -- (0.1,-3.55);
        \node (n3) at (1,-3.8) {$0.0010$};

        \draw [->, black] (1, -4.1) -- (1,-4.8);
        \node (label) at (1.6, -4.4) {Norm.};
        \node (n3) at (0.9,-5.1) {$1.000$};

        \node (label) at (1.8, -5.6) {No intermediate};
        \node (label) at (1.8, -6) {\underline{left normalization\vphantom{g}}};

      \end{tikzpicture}}
    \caption{Example summation of three precision-$p$ numbers (significands showed) in IEEE 754 arithmetic and a Class IV multi-term adder without the intermediate normalization and rounding. In the multi-term adder the carry bits on the left are kept but the bits past precision $p$ in the fraction are discarded; the normalization and rounding steps are performed at the end, after all the addition operations have been completed. We take $p=5$. On the left is the case in which the significand grows and requires a right-shift to be normalized. On the right is the case with a significant cancellation~\cite[p.~242]{mbdj18} which requires multiple left-shifts to normalize. Notice that the former improves the accuracy of the second addition operation, while the latter makes it worse for the multi-term adder compared with the sum computed using IEEE 754 2-term addition operations. IEEE 754 arithmetic uses round-to-nearest even-on-ties in this example. The monotonicity issue is caused by the lack of right-shift normalization and Equation~\ref{FLR} models the adder that lacks only this normalization in order to simplify.}
    \label{fig:normalization}
  \end{figure*}

We need a modified floating-point addition model to describe Class IV multi-term addition units with precision growth.
We take the normalized significand of a floating-point number to be $2^{p-1} \leq m < 2^{p}$~\cite{ieee19}.
In the binary representation of $m$ the binary point is defined to be between the first and second left-most bits of $m$~\cite{ieee19}.
We now consider a modified version of IEEE 754 addition operation without this constraint, meaning that the normalization step in the addition is not performed.
Specifically, we will focus on the normalization that requires the right-shift of the significand by one step (Figure~\ref{fig:normalization}).
Namely, instead of having one bit to the left of the binary point we assume there are multiple bits for carries to propagate when the result of the partial summation reaches or crosses the powers of two.
Equivalently, we can keep the normalization but add one bit of precision if the sum reaches the next power of two.

Take $a, b \in \mathbb{R}$.
If $|a| < |b|$ swap them so that in general we assure $|a| \geq |b|$.
Define $t =2^{1 + \lfloor \log_2{|a|} \rfloor}$: this finds the absolute value of the power of two nearest to $|a|$ with $|a| < |t|$.
Then the adder with precision increase (which describes an adder without the right-shift normalization) can be defined as
\begin{align}\label{FLR}
    \flr{a+b} &=
    \begin{cases}
      \flp{p}{a+b} & \text{ if } |a+b| < t, \\
      \flp{p+1}{a+b} & \text{ if } |a+b| \geq t.
    \end{cases}
\end{align}
When we use this adder multiple times, for example to compute $\flr{\flr{x_1 + x_2}+x_3}$ any precision increase in the first adder is propagated into the next adder.
Therefore this expression can grow precision from $p$ to $p+1$, while $n$ additions could grow precision from $p$ to $p+\lceil\log_2n\rceil$---we call this precision growth after each addition in a multi-term summation a \textit{gradual precision growth}.
In practice the final result may also be rounded to some desired target precision: $\fl{\flr{x_1+x_2}}$, $\fl{\flr{\flr{x_1+x_2}+x_3}}$, and so on.
As an aside, this double rounding can cause issues with accuracy of the final result~\cite{roux14, rump17}, but this is not the cause of the non-monotonicity and is not addressed further in this paper.
This model of addition does not model the lack of left-shift normalization (Figure~\ref{fig:normalization}); it is not required for the purposes of this article.

A similar device was used by Ashenhurst~and~Metropolis~\cite[p.~418]{asme59} for error analysis of unnormalized floating-point arithmetic.

\subsection{Monotonicity of the modified addition}

It can be shown using the similar reasoning as in Section~\ref{sec:monotonicity-ieee754} that $\fl{\flr{x + y}}$ and $\fl{\flr{\flr{x_1+x_2}+x_3}}$ are monotonic.

\begin{theorem}
  Addition of two operands, $\flr{x+y}$ with $x \in \mathbb{R}$ and $y \in \mathbb{R}$, is monotonic with round-to-nearest, round-towards-zero, round-toward-negative, and round-toward-positive.
\end{theorem}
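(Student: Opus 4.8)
The plan is to reduce the claim to a one-variable statement and then to a single step between neighbouring representable operands. By commutativity it suffices to prove monotonicity in each argument separately, since for $x\le x^{*}$ and $y\le y^{*}$ one has $\flr{x+y}\le\flr{x^{*}+y}\le\flr{x^{*}+y^{*}}$ once each single-argument increase is controlled. Fixing $x$ and increasing $y$, and using that the addends lie on the precision-$p$ grid (the inputs $x_i$ are precision-$p$ numbers and this is the physically relevant setting for the unit), I would further reduce to the case where $y$ advances to the next representable value, so that the general case follows by chaining. Throughout, write $M=\max(|x|,|y|)$, so that $t=2^{1+\lfloor\log_2 M\rfloor}$ and $\flr{x+y}$ is the exact sum $s=x+y$ rounded on a grid whose spacing is set by $M$ and by whether $|s|$ has reached $t$.

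The first and easiest regime is the one in which the increase of $y$ leaves $M$ inside the same binade, so that $t$ is unchanged. Here I would observe that, for fixed $t$, the map $s\mapsto\flr{s}$ is simply rounding to one fixed grid: below $t$ the spacing is the precision-$p$ ulp, and on $[t,2t)$ the precision bump $p\to p+1$ exactly halves the ulp there, so the two pieces match and the grid has a single uniform spacing across the threshold $t$. Rounding to a fixed grid is monotone by the argument of Section~\ref{sec:rn}, which already covers the sub-case where the exact sum crosses $t$ and the extra carry bit is retained; this disposes of increases of the non-dominant operand and of the dominant operand within its binade.

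The main obstacle is the remaining transition, in which increasing the dominant operand carries $M$ across a power of two $P$: then $t$ doubles and the grid on $[P,2P)$ coarsens by a factor of two, because the precision there drops from $p+1$ back to $p$. Since the larger sum is now rounded on a coarser grid, a downward jump is a priori possible, and this is exactly the mechanism that later produces non-monotonicity for $n\ge 4$. I would close the gap using the grid structure of the inputs: two neighbouring representable operands straddling $P$ differ by at least the fine ($p+1$) ulp on $[P,2P)$, so the sum advances by at least one such ulp. It then remains to prove a one-line rounding lemma of the shape $\flp{p+1}{s}\le\flp{p}{s+\varepsilon}$ whenever $\varepsilon$ is at least the fine ulp, which after rescaling to ulp units reduces to the elementary inequality $\mathrm{rnd}(u)\le 2\,\mathrm{rnd}\!\left(\tfrac{u+1}{2}\right)$ for the round-to-nearest-even operator $\mathrm{rnd}$ on the reals; the tightest cases occur just below even integers and give equality, never a strict reversal.

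Finally I would dispatch the directed modes and the signs. For round-toward-zero, round-toward-negative, and round-toward-positive, the coarse grid is a subgrid of the fine grid and directed rounding is monotone, so the same neighbouring-operand reduction applies with an even shorter argument; the mixed-sign (cancellation) case is untouched by the right-shift model of \eqref{FLR}, so it reduces to ordinary rounding of the exact sum and is handled as in Section~\ref{sec:addition-of-two-operands}. Assembling the regimes gives $\flr{x+y}\le\flr{x+y^{*}}$ for $y\le y^{*}$, and hence monotonicity in both arguments. I expect the power-of-two crossing of the dominant operand to be the only genuine difficulty, precisely because that is where the retained carry bit is discarded.
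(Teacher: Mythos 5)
Your proposal is correct (under the hypothesis you introduce; see below), but it follows a genuinely different and far more rigorous route than the paper. The paper's own proof is two sentences: when no precision growth occurs it reduces to Section~\ref{sec:monotonicity-ieee754}, and when growth occurs it sets $s_1=\flr{a+b}$ and $s_2=\flr{a+c}$ with $c>b$ and concludes $s_1\le s_2$ ``due to monotonicity of rounding''. That step silently treats $\flr{\cdot}$ as nearest-point rounding to one fixed grid. Your decomposition makes explicit when that is legitimate and when it is not: with the dominant operand (hence $t$) fixed, the precision bump at $t$ exactly offsets the binade change, the grid has uniform spacing across $t$, and the one-grid argument applies; but when the increase carries $\max(|x|,|y|)$ across a power of two $P$, the threshold $t$ doubles, the grid on $[P,2P)$ coarsens from spacing $u$ to $2u$, and the two sums are rounded by \emph{different} maps, so monotonicity of a single rounding does not settle anything. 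Your crossing lemma $\flp{p+1}{s}\le\flp{p}{s+\varepsilon}$ for $\varepsilon$ at least the fine ulp---equivalently $\mathrm{rnd}(v)\le 2\,\mathrm{rnd}\bigl((v+1)/2\bigr)$, which is correct, with equality just below even integers---is exactly the missing ingredient, and your reduction to single steps between neighbouring representable operands is what makes $\varepsilon\ge u$ available at the crossing. What the paper's approach buys is brevity; what yours buys is an actual proof of the only hard case, and it isolates precisely the mechanism (the retained carry bit being discarded when $t$ doubles) that the paper later exploits to break monotonicity for $n\ge 4$.

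One caveat deserves emphasis: your proof uses a hypothesis absent from the theorem statement, namely that the operands lie on the precision-$p$ grid, so that a single step is at least the fine ulp. This is not cosmetic: for unrestricted $x,y\in\mathbb{R}$, as the theorem is literally quantified, the statement is false. Take $p=3$, so that on $[1,2)$ the precision-$4$ spacing is $1/8$ and the precision-$3$ spacing is $1/4$, and let $x=77/128$, $y=127/128$, $y^*=129/128$. Then $\max(|x|,|y|)<1$ gives $t=1$ and $\flr{x+y}=\flp{4}{1.59375}=1.625$, while $\max(|x|,|y^*|)\ge 1$ gives $t=2$ and $\flr{x+y^*}=\flp{3}{1.609375}=1.5$ under round-to-nearest; thus $y^*>y$ but the computed sum decreases. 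Your restriction to representable inputs (the physically relevant reading, as you note) is exactly what rescues the claim; the paper's proof, which never distinguishes the two grids, registers neither this difficulty nor the needed hypothesis.
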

\begin{proof}
  First, if the internal adder does not grow precision, the final rounding does not have any effect and the summations are monotonic as shown in Section~\ref{sec:monotonicity-ieee754}.
  Let us consider the monotonicity of $\fl{\flr{x+y}}$ when the precision grows by one bit.
  Take $s_1=\flr{a+b}$ and $s_2=\flr{a+c}$ with $c>b$.
  Due to monotonicity of rounding, $s_1\leq s_2$, and therefore $\fl{s_1}\leq\fl{s_2}$.
\end{proof}

\begin{theorem}
  Addition of three operands, $\flr{\flr{x_1+x_2}+x_3}$ with $x_i \in \mathbb{R}$, is non-monotonic with round-to-nearest, round-towards-zero, and round-toward-negative ($x_i > 0$) or round-toward-positive ($x_i < 0$), except if the final rounding $\fl{\flr{\flr{x_1+x_2}+x_3}}$ to the starting precision is performed.
  \label{thm:three-term}
\end{theorem}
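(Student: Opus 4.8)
The plan is to prove the statement in two parts: first exhibit an explicit counterexample showing that the unrounded composition $\flr{\flr{x_1+x_2}+x_3}$ fails monotonicity, and then argue that inserting the final rounding $\fl{\cdot}$ back to the starting precision $p$ removes the failure, which is exactly the exceptional case. The whole phenomenon hinges on one observation about a single application of $\flr{\cdot}$: the value of $t$ in \eqref{FLR} is set by the larger operand, so moving that operand from just below a power of two $2^k$ up to $2^k$ itself switches the output from the retained precision-$(p+1)$ grid, whose spacing is one step finer, to the coarse precision-$p$ grid. Because the no-right-shift rule keeps a bit exactly when $|a+b|$ reaches $t$ but discards it once the larger operand is at least $2^k$ (then $t=2^{k+1}$ and no crossing occurs), a tiny increase of the larger operand across $2^k$ can make $\flr{a+b}$ strictly \emph{decrease} by one fine step. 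I would first isolate and prove this single-operation fact, since the earlier two-operand argument only increased the operand that leaves $t$ unchanged and therefore did not rule it out.

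Next I would turn this into a genuine three-term counterexample. The idea is to choose $x_1,x_2$ so that $s=\flr{x_1+x_2}$ lands just above $2^k$ on an odd point of the fine precision-$(p+1)$ grid, and to take $x_3$ small enough (a fraction of a fine step) that the second, precision-propagated addition leaves that odd point unchanged; the final value is then an odd precision-$(p+1)$ number just above $2^k$. Increasing $x_1$ by a small amount pushes the larger operand of the first addition from below $2^k$ up to $2^k$, which by the single-operation fact collapses $s$ onto the coarse grid, and the second addition carries the smaller value through, so the computed sum strictly decreases although an input increased. For $p=5$ and $2^k=32$ one checks, with the stated precision propagation, that $x_1=31.75,\ x_2=1,\ x_3=0.25$ gives $\flr{\flr{x_1+x_2}+x_3}=33$, whereas raising $x_1$ to $32$ yields $32$. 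I would record one such triple for round-to-nearest and one for round-toward-zero / round-toward-negative (keeping $x_3$ a fraction of a fine step so the odd point survives), and obtain the round-toward-positive case for negative inputs by reflecting through the origin, verifying each by direct substitution into \eqref{FLR}.

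For the exceptional clause I would show that applying $\fl{\cdot}$ to precision $p$ restores monotonicity. The key quantitative point is that for three terms the discrepancy created by the crossing is at most one step of the fine grid, that is, half a precision-$p$ unit in the last place at exponent $k$: the two candidate outputs are an odd precision-$(p+1)$ value just above $2^k$ and its coarse neighbour $2^k$, and the final rounding to precision $p$ sends both to the same representable number. I would therefore argue that, once rounded to precision $p$, the increased and unincreased results coincide or keep their order, so the family of counterexamples above evaporates. This also explains why the threshold is $n\ge 4$: only with a second ``small'' addend can the lost contribution exceed one precision-$p$ unit in the last place and thereby escape being absorbed by $\fl{\cdot}$.

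The main obstacle is the exceptional clause rather than the counterexample. Establishing it rigorously requires bounding the crossing-induced gap by a single unit in the last place uniformly and then checking, for each of round-to-nearest, round-toward-zero and the directed modes, that this bounded gap is always erased (or at least never reversed) by the final rounding; the directed modes are the delicate ones, since there the intermediate can be nudged onto an \emph{even} precision-$p$ value that $\fl{\cdot}$ no longer collapses, so the choice of $x_3$ must be controlled to keep the result on the fine grid. Pinning down the single-operation monotonicity-failure statement together with its one-ulp bound is the step on which the rest of the argument rests.
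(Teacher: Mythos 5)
Your proposal is correct under the paper's model of Equation~\ref{FLR}, but it reaches the non-monotonicity by a genuinely different construction than the paper's proof. The paper takes $b=2^k$ with precision-$p$ neighbours $a<b<c$ and $\varepsilon=(c-b)/2$; there the two first sums are \emph{equal}, $\flr{a+\varepsilon}=\flr{b+\varepsilon}=b$, differing only in the hidden precision state (the sum $a+\varepsilon$ lands exactly on $t=b$), and it is the \emph{second} addition that converts this state into a value gap, $\flr{\flr{a+\varepsilon}+\varepsilon}=b+\varepsilon>b=\flr{\flr{b+\varepsilon}+\varepsilon}$; the repair $\fl{b+\varepsilon}=b$ and the RU/negative cases are then checked directly. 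You instead isolate a single-operation lemma: increasing the \emph{larger} operand across $2^k$ changes $t$ itself and can strictly decrease a single $\flr{a+b}$---your numbers verify, since $\flr{31.75+1}=\flp{6}{32.75}=33$ while $\flr{32+1}=\flp{5}{33}=32$ under ties-to-even---after which $x_3$ is merely a passenger ($33.25\mapsto 33$ on the fine grid, $32.25\mapsto 32$ on the coarse one), and the final rounding collapses $33$ and $32$ to the common value $32$, consistent with the except clause. This is a sharp observation the paper never states: it shows the unwrapped two-term $\flr{x+y}$ is already non-monotonic when the maximum operand (hence $t$) changes, so the paper's two-operand theorem is only true as its proof actually reads it, namely for $\fl{\flr{x+y}}$; your remark that that argument varies only the operand leaving $t$ unchanged is exactly right. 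What the paper's route buys is a minimal mechanism (equal intermediate values, pure precision-state effect) that extends verbatim to the $n\ge 4$ proof of Theorem~\ref{thm:sum-nonmonotonic}; what yours buys is a stronger structural diagnosis of where monotonicity first breaks. Two caveats: for RZ/RD your RN triple degenerates ($\flp{6}{32.75}=32$ under RZ), but a working one exists, e.g.\ $x_1=31.75$, $x_2=1.25$, $x_3=0.25$ giving $33$ versus $32$, so your promise of a separate triple is easily kept; and your general proof of the except clause (the uniform one-fine-step bound absorbed by $\fl{\cdot}$, with per-mode case analysis) remains a sketch---though here you are not behind the paper, whose proof likewise only verifies the repair on its specific example, and your closing explanation of why only $n\ge 4$ survives the final rounding matches the mechanism of Theorem~\ref{thm:sum-nonmonotonic} exactly.
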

\begin{proof}
  Take $a$, $b$, and $c$ where $b$ is a power of two, $a$ is the floating-point number preceding $b$, and $c$ is the floating-point number following $b$, such that $a<b<c$.
  We consider positive values, but the proof for negative values is analogous.
  Also, take $\varepsilon=\frac{c-b}{2}$.
  Then, $\flr{b+\varepsilon}=b$ with RN, RD, and RZ, as is $\flr{\flr{b+\varepsilon}+\varepsilon}=b$.
  However, $\flr{a+\varepsilon}=b$ and precision grows by one bit.
  Due to precision growth, $\flr{\flr{a+\varepsilon}+\varepsilon} > b$.
  Therefore monotonicity is not preserved.
  However, the final rounding $\fl{\flr{\flr{a+\varepsilon}+\varepsilon}}=b$ and overall the monotonicity is preserved.
  With RU monotonicity is present even without the final rounding.
\end{proof}

However, as we now show, a sum that includes $n > 3$ terms computed with non-normalized additions modelled by Equation~\ref{FLR} can be non-monotonic in general.

\begin{theorem}
  Summation $\flr{\cdots \flr{x_1+x_2} + \cdots)+x_n}$, with $x_i \in \mathbb{R}$ and $n \geq 4$ is not monotonic with round-to-nearest, round-towards-zero, and round-toward-negative ($x_i > 0$) or round-toward-positive ($x_i < 0$), with and without the final rounding to the starting precision.
  \label{thm:sum-nonmonotonic}
\end{theorem}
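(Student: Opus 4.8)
The plan is to exhibit, for each $n \geq 4$, an explicit pair of input vectors that differ in a single coordinate on which \emph{increasing} that coordinate strictly \emph{decreases} the computed sum, and then to check that the decrease survives the optional final rounding to precision $p$. The construction generalizes the three-term example of Theorem~\ref{thm:three-term}: the extra term is exactly what lets the accumulated excess reach a representable grid point strictly above the critical power of two, rather than landing on the tie that rounds back down. Throughout I would work with positive data and RN, RZ, RD, and note at the end that the negative case with RU is the mirror image.

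Concretely, I would fix a power of two $b = 2^k$ and set $\delta = 2^{k-p} = \ulp{a}$, the spacing of the grid just below $b$, so that $a = b - \delta$ is the predecessor of $b$ and $c = b + 2\delta$ is its successor (the spacing just above $b$ being $2\delta$). I then take the two inputs that differ only in their first coordinate, $x_1 \in \{a, b\}$ with $x_2 = \cdots = x_n = \delta$, and compare the left-to-right evaluations $\flr{\cdots\flr{x_1 + \delta} + \cdots + \delta}$. Increasing $x_1$ from $a$ to $b$ is the ``increase one addend'' move, so non-monotonicity reduces to showing that the $a$-branch returns the larger value.

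The two branches would be traced separately. In the $b$-branch every partial sum equals $b$: each step is $\flr{b + \delta}$, whose exact value $b + \delta$ is the midpoint of $[b, b+2\delta]$ and rounds to the even endpoint $b$ under RN and down to $b$ under RZ and RD; since $b + \delta < 2^{k+1}$ no precision growth is triggered, so the accumulator stays pinned at $b$ at precision $p$. In the $a$-branch the first step gives $\flr{a + \delta} = b$, but because the exact sum reaches the power of two $t = 2^k$ the precision grows to $p+1$; at precision $p+1$ the grid above $b$ has spacing $\delta$, so the remaining $n-2$ additions of $\delta$ are exact and the result is $a + (n-1)\delta = b + (n-2)\delta$. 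For $n \geq 4$ this is at least $b + 2\delta = c > b$, and it stays below $2^{k+1}$ for any realistic $n$, so no further growth occurs. Hence the $a$-branch returns $b + (n-2)\delta > b$ while the $b$-branch returns $b$, the required non-monotonicity before final rounding.

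Finally I would apply the target-precision rounding to both outputs. Since $b + (n-2)\delta \geq b + 2\delta$ and the precision-$p$ grid above $b$ is $\{b, b+2\delta, b+4\delta, \dots\}$, rounding can only move $b + (n-2)\delta$ to a neighbouring grid point still at least $b + 2\delta > b$, while $\fl{b} = b$; so the decrease persists. This is precisely the step where $n = 3$ fails and $n \geq 4$ succeeds: with three terms the accumulated excess is only $b + \delta$, the tie that rounds back to $b$, whereas the fourth term pushes it to the representable value $c$. The main obstacle is the bookkeeping of the precision flag across the chain---establishing that the $b$-branch never grows precision whereas the $a$-branch grows it exactly once, at the first step, and then performs every later addition exactly---and verifying that the identical chain of roundings holds for RN, RZ, and RD, which is why those modes appear in the statement and RU on positive data (which rounds the $b$-branch upward as well) is excluded.
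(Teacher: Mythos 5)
Your proposal is correct and follows essentially the same route as the paper's own proof: the same counterexample with $b$ a power of two, its predecessor $a$, and increments $\delta=(c-b)/2$, the same case split tracing that the $b$-branch stays pinned at $b$ while the $a$-branch triggers precision growth to $p+1$ at the first addition and then accumulates exactly to at least $c=b+2\delta$, and the same check that the final rounding to precision $p$ cannot undo the gap. Your explicit treatment of general $n$ via $n-1$ copies of $\delta$ (and the tie analysis for odd $n-2$) is in fact slightly more detailed than the paper, which only writes out the four-term case.
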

\begin{proof}
Take three consecutive positive floating-point values in some precision-$p$ arithmetic, $a$, $b$, and $c$ with $b$ a power of two.
Then consider evaluating a 4-term summation $\flr{\flr{\flr{x + \varepsilon} + \varepsilon} + \varepsilon}$ with $x, \varepsilon > 0$ (similar example can be shown for $x, \varepsilon < 0$).
In precision-$p$ arithmetic, with round to nearest ties to even, we can show that $\flr{b+\varepsilon}=b$ for $\varepsilon \leq (c-b)/2$, while in precision-$(p+1)$ arithmetic $\flr{b+\varepsilon}=b$ for $\varepsilon \leq (c-b)/4$.
Also, in precision-$p$ arithmetic $a+(c-b)/2=b$.

Take $\varepsilon = (c-b)/2$ and consider two cases.
\begin{enumerate}
\item $x = b$, then $\flr{\flr{\flr{b + \varepsilon} + \varepsilon} + \varepsilon} = b$ (all in precision-$p$).
\item $x = a$, then the first addition $\flr{a + \varepsilon} = b$ (and precision increases to $p+1$ since $b$ is a power of two). Following that, the second addition $\flr{b+\varepsilon}= b + \varepsilon$ as well as the third addition $\flr{b + \varepsilon + \varepsilon} = c$ (since we are in precision-($p+1$)).
\end{enumerate}
Since the sum evaluates to $b$ when $x=b$ and to $c$ when $x = a < b$, we have shown that the 4-term sum in this modified arithmetic is non-monotonic.
The final rounding would not change the result because  $\fl{\flr{b + \varepsilon + \varepsilon}} = c$ since $2\varepsilon$ is a value stored in the bits to the left of the rounding point.
\end{proof}

\begin{corollary}
    The inner product of vectors $a, b \in \mathbb{R}^n$, $\fl{\cdots\flr{\fl{a_1 \times b_1}+\fl{a_2 \times b_2}} + \cdots + \fl{a_n \times b_n}}$ for $n \geq 4$, with round-to-nearest, round-toward-zero, and round-toward-negative ($a_i \times b_i > 0$) or round-toward-positive ($a_i \times b_i < 0$) is non-monotonic.
\end{corollary}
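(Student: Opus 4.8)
The plan is to reduce the inner-product case directly to the summation counterexample established in Theorem~\ref{thm:sum-nonmonotonic}. The computation in the statement first forms the rounded products $\fl{a_i \times b_i}$ and then accumulates them with the non-normalizing adder $\flr{}$ in exactly the left-to-right order $\flr{\cdots \flr{x_1 + x_2} + \cdots + x_n}$, followed by a final rounding; this is precisely the summation whose non-monotonicity I already proved, provided I can force the products to take the specific values $x, \varepsilon, \varepsilon, \varepsilon$ (and, for $n > 4$, some harmless padding) used there.

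First I would recall the counterexample: with $b = 2^e$ a power of two, $a$ and $c$ its floating-point neighbours, and $\varepsilon = (c-b)/2 = 2^{e-p}$, the four-term sum evaluates to $b$ when the first addend is $b$ and to $c > b$ when the first addend is $a < b$, and this behaviour persists even after the final rounding. Next I would realize these addends as products by setting $b_i = 1$ for every $i$. Since multiplication by $1$ is exact in IEEE 754, we have $\fl{a_i \times b_i} = a_i$, so choosing $a_1 = x \in \{a, b\}$ and $a_2 = a_3 = a_4 = \varepsilon$ makes the first four rounded products equal exactly $x, \varepsilon, \varepsilon, \varepsilon$. For $n > 4$ I would pad with $a_i = 0$ (keeping $b_i = 1$), which contributes $\fl{a_i \times b_i} = 0$ and leaves every subsequent partial sum unchanged, so the accumulation reproduces the four-term example verbatim.

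Finally I would invoke monotonicity of scalar multiplication (Theorem~\ref{thm:mult}): increasing $a_1$ from $a$ to $b$ increases the product $\fl{a_1 \times b_1}$ from $a$ to $b$, yet by the reduction the computed inner product decreases from $c$ to $b$. This exhibits the required failure of monotonicity, with and without the outer $\fl{}$.

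The only point requiring care—and the main obstacle, though it is bookkeeping rather than genuine difficulty—is verifying that $\varepsilon = 2^{e-p}$ and the padding zeros are themselves representable floating-point values, and that the parenthesization and accumulation order of the inner product coincide exactly with those of Theorem~\ref{thm:sum-nonmonotonic}, so that the two computations are bit-for-bit identical and the reduction is valid.
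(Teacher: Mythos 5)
Your proposal is correct and matches the paper's own argument, which likewise sets every $b_i = 1$ (so that $\fl{a_i \times b_i} = a_i$ exactly) and then invokes the counterexample of Theorem~\ref{thm:sum-nonmonotonic} verbatim. You merely spell out details the paper leaves implicit---the exactness of multiplication by one, the zero padding for $n > 4$, and the appeal to Theorem~\ref{thm:mult} to confirm the modified input is indeed increased---so the reduction is the same one the author uses.
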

\begin{proof}
  Set all elements of $b$ to $1$. Then the proof of Theorem~\ref{thm:sum-nonmonotonic} concludes this proof.
\end{proof}
\begin{corollary}
  Matrix-vector multiplication $Ax$ and matrix-matrix multiplication $AB$, where $A\in \mathbb{R}^{m\times n}$, $x \in \mathbb{R}^{n}$, $B \in \mathbb{R}^{n \times l}$, $n \geq 4$, with round-to-nearest, round-toward-zero, and round-toward-negative ($a_{ik} \times b_{kj} > 0$) or round-toward-positive ($a_{ik} \times b_{kj} < 0$) are element-wise non-monotonic.
\end{corollary}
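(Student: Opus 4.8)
The plan is to reduce both claims to the non-monotonicity of a single inner product, which was already established for $n \geq 4$ in the preceding corollary and ultimately in Theorem~\ref{thm:sum-nonmonotonic}. The key observation is that every entry of a matrix--vector or matrix--matrix product is itself an inner product evaluated with the same Class IV adder: $(AB)_{ij} = \sum_{k=1}^n a_{ik} b_{kj}$ and $(Ax)_i = \sum_{k=1}^n a_{ik} x_k$. Element-wise non-monotonicity only requires exhibiting one output entry and one input entry whose increase strictly decreases that output entry, so it suffices to make a single inner product behave non-monotonically while leaving the remaining structure inert.

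First I would treat the matrix--matrix case and fix the target output entry $(AB)_{11}$, populating the first row of $A$ and the first column of $B$ with the counterexample data from the proof of Theorem~\ref{thm:sum-nonmonotonic}. Concretely, with $n = 4$ I would set the first column of $B$ to all ones and the first row of $A$ to $(a_{11}, \varepsilon, \varepsilon, \varepsilon)$, using the same $a$, $b$, $c$, $\varepsilon$ as in Theorem~\ref{thm:sum-nonmonotonic}. Because multiplication by $1$ is exact, each product $\fl{a_{1k} b_{k1}}$ equals $a_{1k}$, and the computed entry reduces verbatim to the 4-term sum $\flr{\flr{\flr{a_{11}+\varepsilon}+\varepsilon}+\varepsilon}$, with $a_{11}$ playing the role of the variable addend. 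The remaining rows of $A$, columns of $B$, and any entries beyond the first four can be set arbitrarily (for instance to zero), since they never enter the evaluation of the chosen entry.

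Then I would increase $a_{11}$ from $a$ to $b$. Since $b_{11} = 1 > 0$, this increases the addend $\fl{a_{11} b_{11}}$ as demanded by the sign condition in the statement, yet by Theorem~\ref{thm:sum-nonmonotonic} the computed value of $(AB)_{11}$ drops from $c$ to $b$. This single decrease witnesses element-wise non-monotonicity for matrix--matrix multiplication. The matrix--vector case is then the special case $l = 1$, with the column of $B$ playing the role of $x$, so it follows immediately; the negative-value case is analogous using round-toward-positive, exactly as in Theorem~\ref{thm:sum-nonmonotonic}.

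The main thing to be careful about---the closest thing to an obstacle---is confirming that perturbing the single scalar entry $a_{11}$ affects only the one inner product I have engineered, so that the counterexample is not contaminated by the inert entries. This holds because $a_{ik}$ appears solely in row $i$ of the product and multiplies into the chosen entry with a controlled positive sign, while the zeroed entries contribute nothing to the evaluation of the target component. Once this independence is noted, the reduction to Theorem~\ref{thm:sum-nonmonotonic} is immediate and no genuinely new computation is required.
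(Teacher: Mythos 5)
Your proposal is correct and follows the same route as the paper: the paper's proof simply observes that each output entry is computed by an inner product, deferring to the preceding corollary and Theorem~\ref{thm:sum-nonmonotonic}. Your version just makes the reduction explicit---setting $b_{k1}=1$, embedding the counterexample in the first row of $A$, and checking that the perturbed entry $a_{11}$ touches only the target inner product---which is exactly the content the paper leaves implicit.
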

\begin{proof}
  Each element of the output vector or matrix is computed by the inner product.
\end{proof}

\subsection{Impact of the order of addition}

\begin{figure*}[th!]
  \begin{center}
    \begin{tikzpicture}
      \begin{groupplot}[
        group style={
          group size=4 by 1,
        },
        width=1.8in,
        grid=major,
        ]
        
        \nextgroupplot[
        ylabel={$\frac{|s-\widehat{s}|}{s}$},
        title={4-term adder},
        xlabel = {$n$}
        ]
        
        \addplot[color=OliveGreen!70, mark=triangle] table [x=length, y=fp16-inc-ord] {experiments/data/compare_summation_algs_terms4.dat};
        \addplot[color=black, mark=square] table [x=length, y=fp16-dec-ord] {experiments/data/compare_summation_algs_terms4.dat};
        \addplot[color=Fuchsia!70, mark=o] table [x=length, y=fp16-multi-term-add] {experiments/data/compare_summation_algs_terms4.dat};

        \nextgroupplot[
        title={64-term adder}
        ]

        \addplot[color=OliveGreen!70, mark=triangle] table [x=length, y=fp16-inc-ord] {experiments/data/compare_summation_algs_terms64.dat};
        \addplot[color=black, mark=square] table [x=length, y=fp16-dec-ord] {experiments/data/compare_summation_algs_terms64.dat};
        \addplot[color=Fuchsia!70, mark=o] table [x=length, y=fp16-multi-term-add] {experiments/data/compare_summation_algs_terms64.dat};

        \nextgroupplot[
        title={512-term adder}
        ]

        \addplot[color=OliveGreen!70, mark=triangle] table [x=length, y=fp16-inc-ord] {experiments/data/compare_summation_algs_terms512.dat};
        \addplot[color=black, mark=square] table [x=length, y=fp16-dec-ord] {experiments/data/compare_summation_algs_terms512.dat};
        \addplot[color=Fuchsia!70, mark=o] table [x=length, y=fp16-multi-term-add] {experiments/data/compare_summation_algs_terms512.dat};

        \nextgroupplot[
        title={1024-term adder}
        ]

        \addplot[color=OliveGreen!70, mark=triangle] table [x=length, y=fp16-inc-ord] {experiments/data/compare_summation_algs_terms1024.dat};
        \addplot[color=black, mark=square] table [x=length, y=fp16-dec-ord] {experiments/data/compare_summation_algs_terms1024.dat};
        \addplot[color=Fuchsia!70, mark=o] table [x=length, y=fp16-multi-term-add] {experiments/data/compare_summation_algs_terms1024.dat};
        
      \end{groupplot}
    \end{tikzpicture}
    
    \begin{tikzpicture}[trim axis left, trim axis right]
      \begin{axis}[
        title = {},
        legend columns=3,
        scale only axis,
        width=1mm,
        hide axis,
        /tikz/every even column/.append style={column sep=0.4cm},
        legend style={at={(0,0)},anchor=center,draw=none,
          legend cell align={left},cells={line width=0.75pt}},
        legend image post style={sharp plot},
        legend cell align={left},
        ]
        \addplot [mark=triangle, OliveGreen!70] (0,0);
        \addplot [mark=square, black] (0,0);
        \addplot [mark=o, Fuchsia!70] (0,0);
        \legend{Inc. order, Dec. order, Use of multi-term adder};
      \end{axis}
    \end{tikzpicture}
  \end{center}
  \caption{Summation of positive random binary16 vectors of increasing length. Three summation algorithms are used: recursive summation in increasing order of magnitude using the binary16 IEEE 754 arithmetic, recursive summation in decreasing order of magnitude using the binary16 IEEE 754 arithmetic, and recursive blocking summation using Class IV multi-term adders of various sizes. Relative errors are measured by comparing with the summation of the same values in binary64 arithmetic.}
  \label{fig:summation_algorithms0}
\end{figure*}
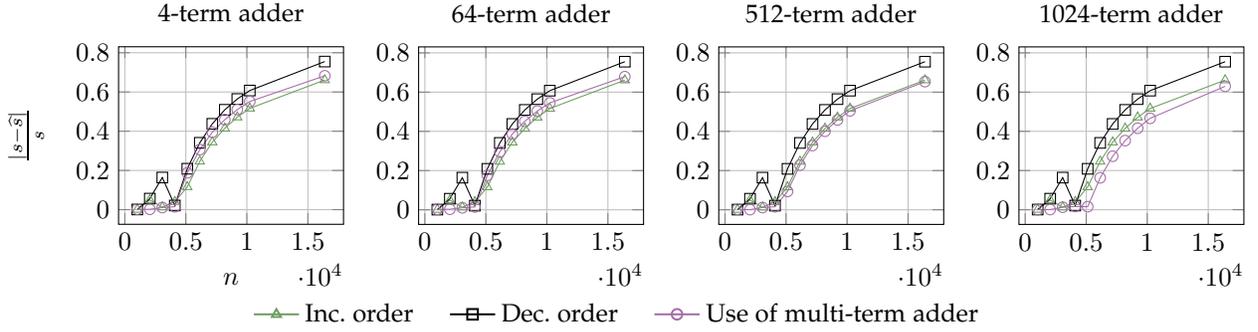

Class IV multi-term adders align all input significands relative to the largest magnitude addend.
This is performed so that all the significands can be right-shifted, with lower order bits dropped or rounded.
If the alignments were performed relative to an addend of arbitrary choice, shifting both to the left and right would be required.
When the shifts are to the left, shifted bits cannot be dropped and would have to be preserved, which would introduce a high hardware cost similar to Class I/II adders.
As many bits from the left shifts would have to be preserved as the highest difference between the exponents of addends.

After the significands are aligned relative to the largest magnitude addend's exponent, if there is no normalization of intermediate sums, the addition acts like the addition of fixed-point values and is associative; the order of performing additions in such a method will not impact the final result.
Whether all the addends are added in series or in parallel also does not have any impact on the final result.

When adding a series of positive floating-point values, doing so in an increasing rather than a decreasing order in absolute value reduces the worst-case error bound~\cite[Sec.~4.2]{high:ASNA2}; this ordering may not reduce the \emph{actual error}~\cite[Sec.~2]{high93s}.
When considering both positive and negative numbers, the decreasing order can yield better accuracy in the face of cancellation~\cite[Sec.~2]{high93s}.
However, the hardware starts at the largest magnitude addend, as discussed above; therefore it can be interesting to check the accuracy of the summation, which we do in the following section.
Note that that when there is no intermediate normalization implemented, the addition is in effect growing precision, which may improve the accuracy.

\section{Numerical experiments}

We have simulated the Class IV multi-term adders with the gradual precision growth in MATLAB, using the custom precision simulator \texttt{CPFloat}~\cite{fami23}.
The code for reproducing the results is available\footnote{\url{https://github.com/north-numerical-computing/multi-operand-add-monotonicity}}.

\subsection{Order of addends and associativity}

In Figure~\ref{fig:summation_algorithms0} we plot relative errors $\frac{|s_n-\widehat{s_n}|}{s_n}$ for adding $n$ floating-point values in binary16 arithmetic ($\widehat{s_n}$), with $p=11$ and $e_{max}=15$, compared with the sum in binary64 arithmetic ($s_n$).
The addends are pseudo-random numbers generated in MATLAB using the \emph{mrg32k3a} random number generator with a seed of $500$, in the range $(0,0.001)$.
We vary $n$ and perform summations with recursive summation algorithm with addends in the decreasing and increasing orders, as well as in the original order with the multi-term adders with terms $4$, $64$, $512$, and $1024$.
As expected, the decreasing ordering results in the largest errors.
In most cases, multi-term adders are worse or very close to recursive summation with increasing ordering of addends.
However, with the 1024-term adder, sums become more accurate.
This can be explained through precision growth---with the larger adders there is a possibility for more precision growth, which improves accuracy.
In this experiment we observed precision to grow to the total of $19$ bits, from the default $11$.

In \texttt{test\_associativity.m} we generated a vector of 64 pseudo-random binary16 values, randomly permuted them $10^4$ times and each time computed the sum in IEEE 754 arithmetic and the model of a 64-term Class IV adder.
Checking the range of computed sums we found that it is non-zero for the IEEE 754 arithmetic and zero for the 64-term Class IV adder, confirming non-associativity and associativity, respectively.

\subsection{Monotonicity}

For the purposes of demonstrating non-monotonicity of the Class IV adders, we compute $$\fl{\cdots \fl{x_1+x_2} + \cdots)+x_n}$$ and $$\fl{\flr{\cdots \flr{x_1+x_2} + \cdots)+x_n}}$$ in three small floating-point systems: $p=3$, $e_{max}=3$; $p=4$, $e_{max}=3$; and $p=5$, $e_{max}=4$.
We construct the sum for severe non-monotonicity to appear, as follows.
First, we set all $x_i=0.25$ and then vary $x_1$ by changing it to the adjacent floating-point value towards $+\infty$ until all representable values are covered.
On each iteration we sum the values $x_i$ with the two different addition models, the multi-term adder with precision growth (Class IV) and the IEEE 754 adder with normalization and rounding after each addition operation (Class III).
We report the value of the sum as well as the relative error compared with the same sum performed in binary64 arithmetic.
The results are plotted in Figure~\ref{fig:monotonicity-experiment0}.

First, consider the first column of diagrams in Figure~\ref{fig:monotonicity-experiment0}.
In the top diagram, we see that from the beginning the sum saturates to some quantity and for a while stagnates with the IEEE 754 arithmetic.
Relative error starts increasing.
When the sum reaches this point, all remaining addends (set to $0.25$) are rounded down and do not contribute to the sum.
With the multi-term adder this does not occur because precision grows on powers of two, allowing the sum to keep changing as $x_1$ is being increased.
At a certain point, when $x_1$ crosses the value at which the sum stagnates, the overall value of the sum becomes $x_1$ and both arithmetics align.
At the beginning of this, non-monotonicity appears in the multi-term addition.
As precision is increased (rows of the matrix of diagrams in Figure~\ref{fig:monotonicity-experiment0}), the point at which the IEEE 754 stagnates, and the point at which the multi-term addition shows non-monotonicity, moves to higher values of the sum.

Other columns in Figure~\ref{fig:monotonicity-experiment0} correspond to the larger number of terms being added.
The main observation is that with more terms the severity of non-monotonicity increases because a larger number of terms can grow the sum more in the range where IEEE 754 arithmetic stagnates and the multi-term adder grows precision.

\begin{figure*}[ph!]
  \begin{center}    
    \begin{tikzpicture}
      \node[draw, rotate=90] at (-2,1) {$p=3$ \quad $e_{max}=3$};
      \node[draw, rotate=90] at (-2,-5) {$p=4$ \quad $e_{max}=3$};
      \node[draw, rotate=90] at (-2,-11.5) {$p=5$ \quad $e_{max}=4$};
      \begin{groupplot}[
        group style={
          group size=3 by 6,
        },
        width=2.2in,
        grid=major,
        title style={yshift=-0.1in},
        ]

        \nextgroupplot[
        ylabel = {Sum},
        title={8 terms},
        xmajorticks=false,
        ymin=0
        ]
        
        \addplot[color=Fuchsia!70, mark=o] table [x=x1, y=sum] {experiments/data/summation_values_p3emax3T8.dat};
        \addplot[color=black, mark=diamond, only marks] table [x=x1, y=sum-ieee754] {experiments/data/summation_values_p3emax3T8.dat};

        \nextgroupplot[
        title={16 terms},
        xmajorticks=false,
        ymin=0
        ]

        \addplot[color=Fuchsia!70, mark=o] table [x=x1, y=sum] {experiments/data/summation_values_p3emax3T16.dat};
        \addplot[color=black, mark=diamond, only marks] table [x=x1, y=sum-ieee754] {experiments/data/summation_values_p3emax3T16.dat};

        \nextgroupplot[
        title={32 terms},
        xmajorticks=false,
        ymin=0
        ]
        
        \addplot[color=Fuchsia!70, mark=o] table [x=x1, y=sum] {experiments/data/summation_values_p3emax3T32.dat};
        \addplot[color=black, mark=diamond, only marks] table [x=x1, y=sum-ieee754] {experiments/data/summation_values_p3emax3T32.dat};

        \nextgroupplot[
        height=1.2in,
        yshift=0.8cm,
        ylabel={Rel. error},
        xlabel={$x_1$},
        ]

        \addplot[color=Fuchsia!70, mark=o] table [x=x1, y=sum-error] {experiments/data/summation_values_p3emax3T8.dat};
        \addplot[color=black, mark=diamond, only marks] table [x=x1, y=sum-ieee754-error] {experiments/data/summation_values_p3emax3T8.dat};

        \nextgroupplot[
        height=1.2in,
        yshift=0.8cm,
        xlabel={$x_1$},
        ]
        
        \addplot[color=Fuchsia!70, mark=o] table [x=x1, y=sum-error] {experiments/data/summation_values_p3emax3T16.dat};
        \addplot[color=black, mark=diamond, only marks] table [x=x1, y=sum-ieee754-error] {experiments/data/summation_values_p3emax3T16.dat};

        \nextgroupplot[
        height=1.2in,
        yshift=0.8cm,
        xlabel={$x_1$},
        ]
        \addplot[color=Fuchsia!70, mark=o] table [x=x1, y=sum-error] {experiments/data/summation_values_p3emax3T32.dat};
        \addplot[color=black, mark=diamond, only marks] table [x=x1, y=sum-ieee754-error] {experiments/data/summation_values_p3emax3T32.dat};

        \nextgroupplot[
        yshift=-0.3cm,
        xmajorticks=false,
        title={8 terms},
        ylabel = {Sum},
        ymin=0
        ]

        \addplot[color=Fuchsia!70, mark=o] table [x=x1, y=sum] {experiments/data/summation_values_p4emax3T8.dat};
        \addplot[color=black, mark=diamond, only marks] table [x=x1, y=sum-ieee754] {experiments/data/summation_values_p4emax3T8.dat};

        \nextgroupplot[
        xmajorticks=false,
        title={16 terms},
        yshift=-0.3cm,
        ymin=0
        ]

        \addplot[color=Fuchsia!70, mark=o] table [x=x1, y=sum] {experiments/data/summation_values_p4emax3T16.dat};
        \addplot[color=black, mark=diamond, only marks] table [x=x1, y=sum-ieee754] {experiments/data/summation_values_p4emax3T16.dat};

        \nextgroupplot[
        xmajorticks=false,
        title={32 terms},
        yshift=-0.3cm,
        ymin=0
        ]

        \addplot[color=Fuchsia!70, mark=o] table [x=x1, y=sum] {experiments/data/summation_values_p4emax3T32.dat};
        \addplot[color=black, mark=diamond, only marks] table [x=x1, y=sum-ieee754] {experiments/data/summation_values_p4emax3T32.dat};

        \nextgroupplot[
        height=1.2in,
        yshift=0.8cm,
        ylabel={Rel. error},
        xlabel={$x_1$},
        ]
        
        \addplot[color=Fuchsia!70, mark=o] table [x=x1, y=sum-error] {experiments/data/summation_values_p4emax3T8.dat};
        \addplot[color=black, mark=diamond, only marks] table [x=x1, y=sum-ieee754-error] {experiments/data/summation_values_p4emax3T8.dat};

        \nextgroupplot[
        height=1.2in,
        yshift=0.8cm,
        xlabel={$x_1$},
        ]
        
        \addplot[color=Fuchsia!70, mark=o] table [x=x1, y=sum-error] {experiments/data/summation_values_p4emax3T16.dat};
        \addplot[color=black, mark=diamond, only marks] table [x=x1, y=sum-ieee754-error] {experiments/data/summation_values_p4emax3T16.dat};

        \nextgroupplot[
        height=1.2in,
        yshift=0.8cm,
        xlabel={$x_1$},
        ]
        
        \addplot[color=Fuchsia!70, mark=o] table [x=x1, y=sum-error] {experiments/data/summation_values_p4emax3T32.dat};
        \addplot[color=black, mark=diamond, only marks] table [x=x1, y=sum-ieee754-error] {experiments/data/summation_values_p4emax3T32.dat};

        \nextgroupplot[
        xmajorticks=false,
        title={8 terms},
        ylabel = {Sum},
        yshift=-0.3cm,
        ymin=0
        ]

        \addplot[color=Fuchsia!70, mark=o] table [x=x1, y=sum] {experiments/data/summation_values_p5emax4T8.dat};
        \addplot[color=black, mark=diamond, only marks] table [x=x1, y=sum-ieee754] {experiments/data/summation_values_p5emax4T8.dat};

        \nextgroupplot[
        xmajorticks=false,
        title={32 terms},
        yshift=-0.3cm,
        ymin=0
        ]

        \addplot[color=Fuchsia!70, mark=o] table [x=x1, y=sum] {experiments/data/summation_values_p5emax4T32.dat};
        \addplot[color=black, mark=diamond, only marks] table [x=x1, y=sum-ieee754] {experiments/data/summation_values_p5emax4T32.dat};

        \nextgroupplot[
        xmajorticks=false,
        title={64 terms},
        yshift=-0.3cm,
        ymin=0
        ]
        
        \addplot[color=Fuchsia!70, mark=o] table [x=x1, y=sum] {experiments/data/summation_values_p5emax4T64.dat};
        \addplot[color=black, mark=diamond, only marks] table [x=x1, y=sum-ieee754] {experiments/data/summation_values_p5emax4T64.dat};

        \nextgroupplot[
        xlabel={$x_1$},
        height=1.2in,
        yshift=0.8cm,
        ymax=0.23,
        ylabel={Rel. error},
        ]

        \addplot[color=Fuchsia!70, mark=o] table [x=x1, y=sum-error] {experiments/data/summation_values_p5emax4T8.dat};
        \addplot[color=black, mark=diamond, only marks] table [x=x1, y=sum-ieee754-error] {experiments/data/summation_values_p5emax4T8.dat};

        \nextgroupplot[
        xlabel={$x_1$},
        height=1.2in,
        yshift=0.8cm
        ]

        \addplot[color=Fuchsia!70, mark=o] table [x=x1, y=sum-error] {experiments/data/summation_values_p5emax4T32.dat};
        \addplot[color=black, mark=diamond, only marks] table [x=x1, y=sum-ieee754-error] {experiments/data/summation_values_p5emax4T32.dat};

        \nextgroupplot[
        xlabel={$x_1$},
        height=1.2in,
        yshift=0.8cm
        ]

        \addplot[color=Fuchsia!70, mark=o] table [x=x1, y=sum-error] {experiments/data/summation_values_p5emax4T64.dat};
        \addplot[color=black, mark=diamond, only marks] table [x=x1, y=sum-ieee754-error] {experiments/data/summation_values_p5emax4T64.dat};
      \end{groupplot}
    \end{tikzpicture}
    
    \begin{tikzpicture}[trim axis left, trim axis right]
      \begin{axis}[
        title = {},
        legend columns=2,
        scale only axis,
        width=1mm,
        hide axis,
        /tikz/every even column/.append style={column sep=0.4cm},
        legend style={at={(0,0)},anchor=center,draw=none,
          legend cell align={left},cells={line width=0.75pt}},
        legend image post style={sharp plot},
        legend cell align={left},
        ]
        \addplot [mark=o, Fuchsia!70] (0,0);
        \addplot [mark=diamond, black] (0,0);
        \legend{Multi-term adder, IEEE 754 adder};
      \end{axis}
    \end{tikzpicture}
  \end{center}
  \caption{Summation with various floating-point formats and two types of addition: multi-term addition with precision growth (Class IV) and IEEE 754 addition with normalization and rounding after each operation. Value of of the sum (top of each subdiagram) and the relative error (bottom) compared with the sum performed in the binary64 arithmetic. The $x$-axis corresponds to the quantity of $x_1$ which we vary, with the rest of $x_i=0.25$.}
  \label{fig:monotonicity-experiment0}
\end{figure*}
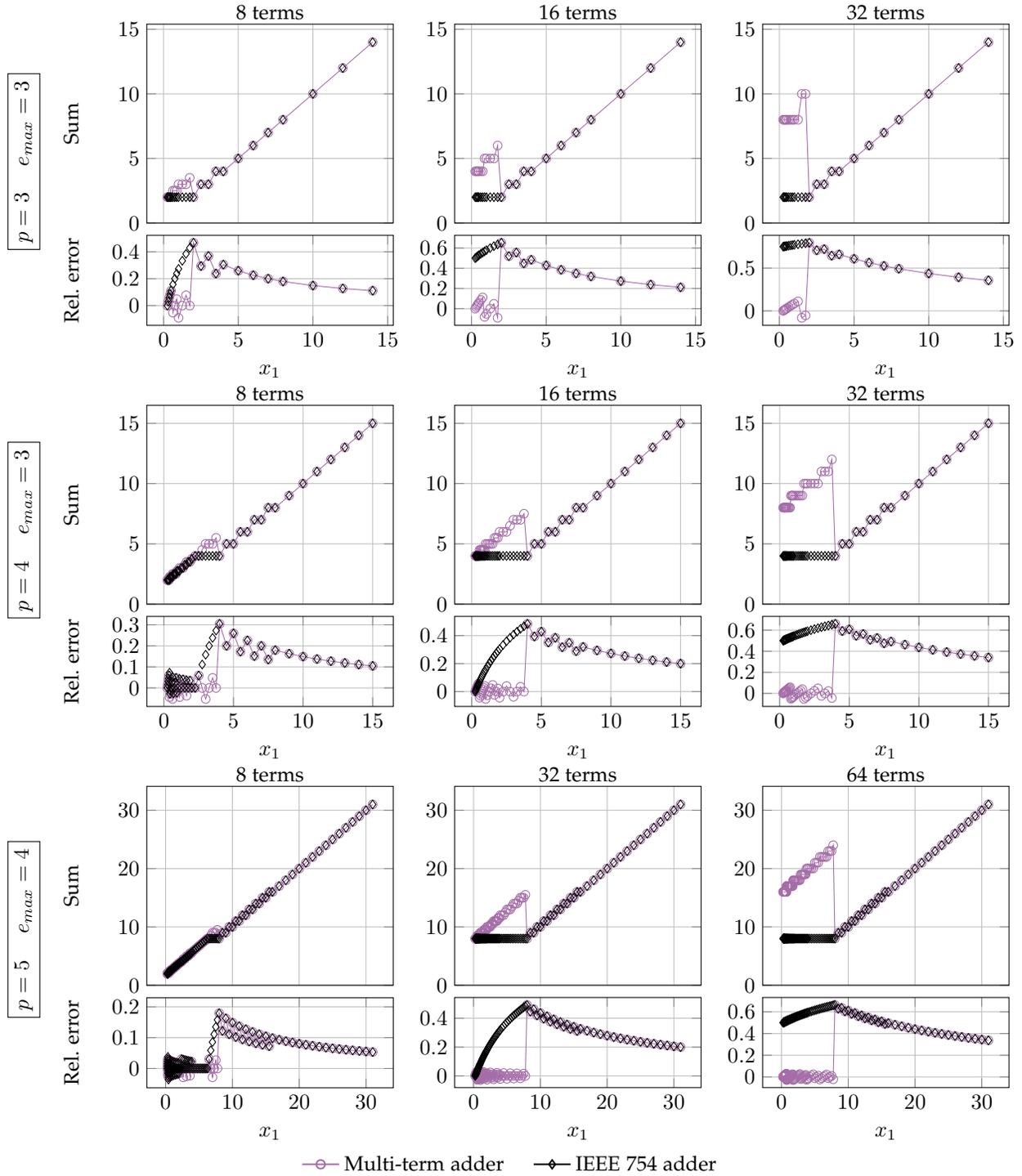

\section{Discussion}

\subsection{Where non-monotonicity can cause issues}

In this subsection we discuss various algorithms that may be impacted by the non-monotonicity of floating-point arithmetic.

In a 1967 paper, Forsythe~\cite{fors67} notes that despite the lack of associativity and distributivity in floating-point addition and multiplication, one can do good analysis provided only that the arithmetic is monotonic.
At that time monotonicity was not always preserved in arithmetics, like it was after the IEEE 754 was released: ``\emph{Such properties seem elemental but they are extremely helpful. And they are surprisingly absent!''~\cite[Sec.~4]{fors67}}.

Demmel,~Dhillon,~and Ren~\cite{ddr95} explore the bisection algorithm for finding eigenvalues of real symmetric tridiagonal matrices.
The algorithm relies on the $\mathrm{count}(x)$ to count the number of eigenvalues that are smaller than $x$ (Algorithm~\ref{alg:count-eigv}).
This function is monotonic, but when implemented in floating-point arithmetic, various implementation details can introduce non-monotonicity, which in turn can cause wrong results by finding negative number of eigenvalues $(\mathrm{count}(x_b)-\mathrm{count}{(x_a)}) < 0$ over a specific range $x_a$ to $x_b$ with $x_b>x_a$.
We have shown that non-monotonicity can appear in the three-term adder if no final rounding is performed (Theorem~\ref{thm:three-term}), which would invalidate the Assumption~1A~\cite[p~120]{ddr95} and introduce a source of non-monotonicity in an implementation of $\mathrm{count}(x)$.

\begin{algorithm2e}[t]
  \caption{$\mathrm{Count}(x, T)$: return the number of eigenvalues of a real symmetric tridiagonal matrix $T$.}
  \label{alg:count-eigv}
  $count \gets 0$\;
  $d \gets 1$\;
  \For{$i \gets 1$ \KwTo $n$}{
    $d \gets a_i - x - \frac{b_{i-1}^2}{d}$\;
    \If{$d < 0$}{
      $count \gets count + 1$\;
    }
  }
  \Return{$count$}\;
\end{algorithm2e}

Higham~\cite[Sec.~2.6]{high:ASNA2} demonstrates that monotonicity of rounding may be an important property when solving a quadratic equation $ax^2-2bx+c=0$.
Computing an expression $\sqrt{b^2-ac}$ is required and if $b^2 = ac$ but $\fl{b^2} < b^2$, computing $\fl{\fl{b^2}-ac}$ with an FMA can result in a negative number passed into the square root function.
While this would not require a 4-term or wider adder, which we explore in this paper, if an expression $\sqrt{\sum_{i=1}^{n}a_i-\sum_{i=1}^{n}b_i}$, where $a$ and $b$ are vectors, which uses the multi-term adder twice and then subtracts the sums, appears in applications, a similar issue to that discussed by Higham can appear.
We provide an example in \texttt{test\_sqrt.m}.
We use a binary32 arithmetic with $p=24$ and $e_{max}=127$, and $n=8$.
Then take
$$a=[1, 1, 1, 1, 1, 1, 1, 16777216]$$
and
$$b=[1, 1, 1, 1, 1, 1, 1, 16777214].$$
Computing sums recursively left to right with IEEE 754 additions in order we get $\sqrt{4}$, while computing them with the 8-term Class IV adder we get $\sqrt{-4}$ due to the non-monotonicity in the sum.
The correct result, computed in binary64 arithmetic, is $\sqrt{2}$.
Note that with the IEEE 754 binary32 arithmetic we can change the order of addends and impact the result:
$$a=[16777216, 1, 1, 1, 1, 1, 1, 1]$$
$$b=[16777214, 1, 1, 1, 1, 1, 1, 1]$$
gives $\sqrt{0}$, but
$$a=[16777216, 1, 1, 1, 1, 1, 1, 1]$$
$$b=[1, 1, 1, 1, 1, 1, 1, 16777214]$$
gives $\sqrt{-4}$.
However, as discussed, with the Class IV adder the order does not impact the final result and this example results in $\sqrt{-4}$ irrespective of it; as a result it cannot be fixed at software layer by reordering.

In computing elementary functions often monotonicity is one of the properties that is desired~\cite{mull16}.
Silverstein~et~al.~\cite{sst90} in the report on the UNIX math library \texttt{libm}, note: ``\emph{Monotonicity failures can cause problems, for example, in evaluating divided diferences. The only function we have observed to violate monotonicity for C/SVR4 is \texttt{lgamma}. We would not be surprised to learn of others.}''.
This is of general interest, and we are not aware of a link between the non-monotonicity of multi-term adders and mathematical functions.

Interval arithmetic can be impacted by non-monotonicity of floating-point.
Rump~\cite[Sec.~2.1]{rump99} gives an example of computing lower and upper limits of the interval of matrix multiplication of point matrices (lower and upper limits are equal) through the change of rounding modes in floating-point arithmetic.
Similar to Rump's example, in \texttt{test\_interval.m} we compute the interval of summation operation, through round-toward-negative and round-toward-positive, for lower and upper limits, respectively.
First we sum a vector $a=[16777216, 1, 1, 1, 1, 1, 1, 1]$ which yields an interval $[16777216,16777230]$.
We then decrease $a_1$ and perform the same for $b=[16777214, 1, 1, 1, 1, 1, 1, 1]$ which yields an interval $[16777220,16777222]$.
By decreasing one of the arguments we expect the interval to shift down the real axis, but the lower end shifts up and the interval becomes narrower due to precision growth in the multi-term adder.
As before, it is possible to cause a similar issue with the IEEE 754 arithmetic, by changing the order of the addends.

\subsection{Possible solutions}

Having gradual precision growth in the dot product can be beneficial in getting more accurate results.
However, if monotonicity is needed, a straightforward solution would be to always normalize after each addition to stop the gradual precision growth in the internal accumulator.
This can replicate the behaviour of a software implementation of summation based on standard IEEE 754 operations.
The hardware cost would most likely be increased substantially.

Another approach could be to detect where precision growth occurs by monitoring the carry-out bits in the significands of the sums and then informing any further additions to cancel an appropriate number of bits from the bottom of the significand and not take them into account when adding.
However, this is highly dependent on what particular implementations are doing and would probably impact \textit{guard}, \textit{round} and \textit{sticky} bits which might or might not be used in the intermediate additions inside the multi-term adder, depending on the rounding properties needed.
The additional logic for this may make the Class IV adders as expensive as Class I/II adders, but we leave this for a separate hardware-oriented study.

As a summary:
\begin{itemize}
\item Class I/II adders perform the summation as though only one rounding error is induced at the end, and with these adders monotonicity and associativity of summation are preserved.
\item Class III adders perform the summation which numerically behaves as a software implementation with the IEEE 754 addition operations, with normalization and rounding after each.
  With these adders, associativity is not preserved, but monotonicity is.
\item With the Class IV adders, monotonicity is not preserverd, as we have demonstrated, but the associativity is.
\end{itemize}
One can choose an appropriate implementation based on the numerical properties required in a particular architecture.

\section{Conclusion}

IEEE 754-2019 provides the following guidance for implementing multi-term summation or dot product operations (collectively called \emph{reduction operations}) \cite[Sec.~9.4]{ieee19}: ``\emph{Language standards should define the following reduction operations for all supported arithmetic formats. Unlike the other operations in this standard, these operate on vectors of operands in one format and return a result in the same format. Implementations may associate in any order or evaluate in any wider format.}''.
In the present manuscript we analysed how various hardware designs in the literature as well as in the available hardware implement this.
We demonstrated that there are four classes of implementation, each with different hardware complexity and numerical properties.

Focusing on Class IV multi-term addition, where the significand alignment of the summands is performed in limited precision and the sum is performed without the normalization, all resulting in precision growth during the summation, we proved that non-monotonicity can occur.
We also showed that Class I--III devices are not subject to this.
Our results should assist in understanding numerical differences between different implementations and show what is needed in order to preserve the property of monotonicity in floating-point multi-term addition.
This applies to dot products and matrix multiplication operations, which use multi-term addition.
The results do not necessarily apply only to hardware as the same low level floating-point algorithms could be implemented in software, for example on devices that do not contain floating-point units.

Finally, our results indicate that monotonicity should be considered in the next iteration of the IEEE 754 standard and the new standard, P3109\footnote{\url{https://sagroups.ieee.org/p3109wgpublic/}}, of low-precision floating-point formats.
Suitable recommendations for the reduction operations may need to be provided when the preservation of monotonicity is needed.

\section{Acknowledgements}

The author is grateful to M.~Fasi for the help on the proof of Theorem~\ref{thm:ieee754-sum-monotonic} and other comments and suggestions, and N.~J.~Higham, Nicolas Brunie and three anonymous referees for comments and suggestions.

\section{Funding}

This work was supported by the Engineering and Physical Sciences Research Council (EPSRC) grant EP/P020720/1, and by the Exascale Computing Project (17-SC-20-SC), a collaborative effort of the U.S. Department of Energy Office of Science and the National Nuclear Security Administration.
The author also acknowledges the support of the School of Computing, University of Leeds.

\bibliographystyle{IEEEtran}
\bibliography{IEEEabrv,strings,njhigham,references}

%
\end{document}